%
\documentclass[runningheads]{llncs}

\usepackage[T1]{fontenc}
%
\usepackage{graphicx}
\usepackage[utf8]{inputenc}
\usepackage{microtype}
\usepackage{amsmath,amssymb, color, tikz, mathtools}
\usepackage{tcolorbox}
\usepackage{mdframed}
\usepackage{color}
\usepackage{colortbl}
\usetikzlibrary{positioning}
\usetikzlibrary{shapes}
\usetikzlibrary{decorations.pathreplacing}
\usepackage{enumerate, enumitem} 


\usepackage{hyperref}
\makeatletter
\newcommand{\printfnsymbol}[1]{%
  \textsuperscript{\@fnsymbol{#1}}%
}
\makeatother




\usepackage{graphicx}
\graphicspath{{./image/}}

\hypersetup{colorlinks,linkcolor={blue},citecolor={blue},urlcolor={blue}}
\usepackage[ruled,vlined]{algorithm2e}
\usepackage{algpseudocode}



\newcommand{\N}{\mathbb{N}}

\newcommand{\zone}{\{0, 1\}}


\renewcommand{\deg}{\mathrm{deg}}

\newcommand{\supp}{\mathrm{supp}}

\newcommand{\cert}{\mathsf{C}}
\newcommand{\fcert}{\mathsf{FC}}

\newcommand{\bs}{\mathsf{bs}}
\newcommand{\fbs}{\mathsf{fbs}}
\newcommand{\s}{\mathsf{s}}
\newcommand{\ms}{\mathsf{ms}}
\newcommand{\hsc}{\mathsf{MCC}}

\newcommand{\spar}{\mathsf{spar}}

\newcommand{\AND}{\mathsf{AND}}

\newcommand{\OR}{\mathsf{OR}}


\renewcommand{\hat}{\widehat}
\renewcommand{\tilde}{\widetilde}

\newtheorem{open question}[theorem]{Open question}







\bibliographystyle{plainurl}

\begin{document}

\title{Relations between monotone complexity measures based on decision tree complexity}

\author{Farzan Byramji \inst{1}$^,$ \thanks{Work done while FB and VJ were at Indian Institute of Technology, Kanpur} \and Vatsal Jha \inst{2}$^,$ \printfnsymbol{1} \and Chandrima Kayal \inst{3} \and Rajat Mittal \inst{4}}
\institute{ University of California San Diego \\\email{fbyramji@ucsd.edu} 
\and Purdue University\\ \email{jha36@purdue.edu} \and  Indian Statistical Institute, Kolkata \\\email{chandrimakayal2012@gmail.com} \and Indian Institute of Technology, Kanpur\\ \email{rmittal@cse.iitk.ac.in}}
%
%
\authorrunning{F.~Byramji, V.~Jha, C.~Kayal and  R.~Mittal}
%

%
\maketitle              

\begin{abstract}

In a recent result, Knop, Lovett, McGuire and Yuan (STOC 2021) proved the log-rank conjecture for communication complexity, up to $\log n$ factor, for any Boolean function composed with $\AND$ function as the inner gadget. One of the main tools in this result was the relationship between monotone analogues of well-studied Boolean complexity measures like block sensitivity and certificate complexity. The relationship between the standard measures has been a long line of research, with a landmark result by Huang (Annals of Mathematics 2019), finally showing that sensitivity is polynomially related to all other standard measures.
 
In this article, we study the monotone analogues of standard measures like block sensitivity ($\mathsf{mbs}(f)$), certificate complexity ($\mathsf{MCC}(f)$) and fractional block sensitivity ($\mathsf{fmbs}(f)$); and study the relationship between these measures given their connection with $\AND$-decision tree and sparsity of a Boolean function. We show the following results:
\begin{itemize}
    \item Given a Boolean function $f:\zone^n \rightarrow \zone$, the ratio $\frac{\mathsf{fmbs}(f^{l})}{\mathsf{mbs}(f^{l})}$ is bounded by a function of $n$ (and not $l$). A similar result was known for the corresponding standard measures (Tal, ITCS 2013). This result allows us to extend any upper bound by a \emph{well behaved} measure on monotone block sensitivity to monotone fractional block sensitivity. 
    \item The question of the best possible upper bound on monotone block sensitivity by the logarithm of sparsity is equivalent to the natural question of best upper bound by degree on sensitivity. One side of this relationship was used in the proof by Knop, Lovett, McGuire and Yuan (STOC 2021).
     
    \item For two natural classes of functions, symmetric and monotone, hitting set complexity ($\mathsf{MCC}$) is equal to monotone sensitivity. 
\end{itemize}

    
%
\end{abstract}



\section{Introduction}



Decision tree complexity is one of the simplest complexity measure for a Boolean function where the complexity of an algorithm only takes into account the number of queries to the input. Various complexity measures based on decision tree complexity (like quantum query complexity, randomized query complexity, certificate complexity, sensitivity, block sensitivity and many more) have been introduced to study Boolean functions (functions from a subset of $\zone^n$ to $\zone$)~\cite{BW,ABK16,NS94}. Understanding the relations between these complexity measures of Boolean function has been a central area of research in computational complexity theory for at least 30 years. Refer to~\cite{BW} for an introduction to this area. 

Two such complexity measures $M_1$ and $M_2$ are said to be polynomially related if there exists constants $c_1$ and $c_2$ such that $M_1= O(M_2^{c_1})$ and $M_2 = O(M_1^{c_2})$.  Recently, Huang \cite{Huang} resolved a major open problem in this area known as the ``sensitivity conjecture'', showing the polynomial relationship between the two complexity measures sensitivity $(\s(f))$ and block sensitivity $(\bs(f))$ for a Boolean function $f$ (implying sensitivity is polynomially related to almost all other complexity measures too).


Once two complexity measures have been shown to be polynomially related, it is natural to ask if the relationships are tight or not. This means, if we can show $\forall~ f,~M_1(f) = O(M_2(f))^{\alpha}$, then is there an example that witnesses the same gap? In other words, does there exists a function $f$ for which $M_1(f) = \Omega(M_2(f))^{\alpha}$?
Figuring out tight relations between complexity measures based on decision trees has become the central goal of this research area. (\cite{ABK+} compiled an excellent table with the best-known relationships between these different measures.) 

Additionally, many new related complexity measures have been introduced in diverse areas, sometimes to understand these relations better~\cite{JainK10,BB20,CGLMS23}. Recently, monotone analogues of such combinatorial measures have been explored in \cite{KLMY21} for studying the celebrated \emph{log-rank conjecture} in communication complexity (for definitions of these monotone measures, see \autoref{preliminaries}).

 In particular, Knop et. al.~\cite{KLMY21} resolved the log rank conjecture (up to $\log(n)$ factor) for any Boolean function $f$ composed with $\AND$ function as the inner gadget. 
 For such functions, the rank is equal to the sparsity of the function (denoted by $\spar(f))$ in its polynomial representation (with range $\zone$). So the log-rank conjecture amounts to proving a polynomial upper bound of $\log(\spar(f))$ on the deterministic communication complexity of $(f\circ \wedge_{2})$ i.e. $D^{cc}(f\circ \wedge_{2})$. 
 As mentioned before, the proof provided in \cite{KLMY21} utilized monotone analogues of the standard combinatorial measures (block sensitivity, fractional block sensitivity etc.). 
 The reason for considering monotone measures was due to the observation that the deterministic communication complexity of such functions is related to their fractional monotone block sensitivity ($\mathsf{fmbs}$)
 $$D^{cc}(f\circ \wedge_{2})\leq \mathsf{fmbs}(f) \log(\spar(f)) \log(n).$$





Fractional block sensitivity (and its relation with block sensitivity) has been studied before~\cite{Aaronson20,tal-kul,Tal13,GSS16}. It seems natural to look at the monotone analogues of fractional block sensitivity ($\mathsf{fmbs}$) and block sensitivity ($\mathsf{mbs}$), and see if they can used to upper bound $\mathsf{fmbs}$ with logarithm of sparsity. They precisely do this, and show that for every Boolean function $f:\zone^{n}\rightarrow\zone$: 

\begin{itemize}
\item $\mathsf{mbs}(f) = O(\log^{2}(\spar(f)))$,
\item $\mathsf{fmbs}(f) = O(\mathsf{mbs}^{2}(f))$.
\end{itemize}

This implies that a sparse Boolean function has small fractional monotone block sensitivity and in turn small deterministic communication complexity,
$$\mathsf{fmbs}(f) = O(\log^{4}(\spar(f))) \Rightarrow D^{cc}(f\circ \wedge_{2}) = O(\log^{5}(\spar(f)) \log n). $$ 

Here $\spar(f)$ denotes the sparsity of $f$ as a polynomial with range $\zone$.


The above proof technique gives rise to a natural question, can these relationships between monotone and related measures be improved?
The main objective of this article is to explore this question. Specifically, we look at the following variants.
\begin{itemize}
    \item Is it possible to improve the exponent in the relationship $\mathsf{mbs}(f) = O(\log^{2}(\spar(f)))$?
    \item Can we translate the bound on $\mathsf{fmbs}(f)$ by \emph{well behaved quantities} using their bound on $\mathsf{mbs}(f)$. (Similar to the result of Tal~\cite{Tal13} which 
    allows us to lift upper bounds on block sensitivity to fractional block sensitivity for many well behaved measures.) 
    \item Are there specific class of functions for which monotone analogues have a better dependence on sparsity?
\end{itemize}

Ideally, we would like to compile a table similar to \cite{ABK+} for monotone measures. We start by giving a preliminary table in Appendix~\ref{appendix:table}.

\subsection{Our Results} \label{section 1}


We study the monotone analogues of standard complexity measures like block sensitivity, certificate complexity and their relations with standard complexity measures. 


It is natural to ask if it is possible to improve upper bounds on these monotone measures? One very interesting approach for improving bounds on $\fbs(f)$ (standard measure) is by Tal~\cite{Tal13}. He showed that the ratio of $\bs(f^l)$ and $\fbs(f^l)$ is bounded by a quantity independent of $l$; this allowed him to lift \emph{any} upper bound on $\bs(f)$ by a measure which is \emph{well behaved with respect to composition} to $\fbs(f)$. We prove a similar result for $\mathsf{mbs}(f)$ and $\mathsf{fmbs}(f)$. 



\begin{theorem}
\label{thm: ratio}
Consider a Boolean function $f:\{0,1\}^{n}\rightarrow\{0,1\}$. For a sufficiently large $n$, the ratio $$\frac{\mathsf{fmbs}(f^{l})}{\mathsf{mbs}(f^{l})}\leq p(n)$$ for all $l\geq 1$, where $p(n)$ is a function in $n$ independent of $l$. 
\end{theorem}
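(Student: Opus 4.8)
The plan is to mimic Tal's strategy for the standard measures, adapting each step to the monotone setting. Recall that the key idea in Tal's argument is: (i) there is always an optimal \emph{fractional} block sensitivity certificate (a feasible solution to the LP) supported on a small number of blocks, specifically on at most $\binom{n}{\lfloor n/2\rfloor}$ blocks by a vertex/Carathéodory-type argument; (ii) composition of $f$ with itself interacts nicely with these LP solutions, so that $\mathsf{fbs}(f^l)$ can be related to $\mathsf{bs}(f^l)$ with a loss depending only on the "sparsity" of the optimal fractional solution, i.e., on $n$. I would try to reproduce both ingredients for $\fmbs$ and $\mbs$.

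First I would recall the LP formulation of $\fmbs(f)$ at an input $x$: it is the maximum over weight assignments $w:\{\text{monotone-sensitive blocks at }x\}\to\R_{\geq 0}$ of $\sum_B w_B$ subject to $\sum_{B\ni i} w_B \le 1$ for every coordinate $i$. Exactly as in the standard case, an optimal vertex of this polytope is determined by $n$ tight constraints, so its support has size at most $n$ — and more carefully, one can show (again following Tal) that there is an optimal solution whose number of blocks with nonzero weight is bounded by a function $q(n)$ of $n$ alone (the relevant bound in Tal's paper is $\binom{n}{\lfloor n/2 \rfloor}$; I expect the same combinatorial bound, or something close, to go through since the monotone constraint only restricts which blocks are admissible, not the structure of the LP). This is the step where I would need to check that the monotonicity restriction on blocks (a block $B$ at $x$ is monotone-sensitive only if flipping $B$ moves $x$ "in the allowed direction") does not interfere with the vertex-counting argument; I believe it does not, since it just removes some variables from the LP.

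Next I would set up the composition argument. Given an optimal fractional monotone block sensitivity solution for $f^l$ supported on at most $q(n)$ blocks, I want to "round" it to an integral monotone block sensitivity certificate for $f^l$ (or a related composed function) losing only a factor depending on $q(n)$, hence on $n$. The standard trick is: among the $\le q(n)$ blocks in the support, pick the one of largest weight; its weight is at least $\fmbs(f^l)/q(n)$; then argue that a single heavily-weighted block, together with the combinatorial structure of the composed function, yields $\Omega(\fmbs(f^l)/q(n))$ disjoint monotone-sensitive blocks, i.e. $\mbs(f^l) = \Omega(\fmbs(f^l)/q(n))$. Here I would lean on the self-reducibility of $f^l$: a monotone-sensitive block for $f^l$ decomposes along the tree of copies of $f$, and the fractional weights at the top level can be "spread out" into genuinely disjoint blocks at lower levels. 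Monotonicity is preserved under this decomposition because composition of monotone-direction flips is again a monotone-direction flip, so the resulting blocks are still admissible for $\mbs$.

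The main obstacle I anticipate is precisely this last step: making the rounding/spreading argument work \emph{in the monotone world} and verifying that the loss truly depends only on $n$ and not on $l$. In Tal's proof the crucial point is that the bound $q(n)$ on the sparsity of the optimal LP solution does not grow when you pass from $f$ to $f^l$ — one argues about the support size at a \emph{single} level and then recurses, so the per-level loss is $q(n)$ and these losses telescope into a single factor $p(n)$ rather than $p(n)^l$. I would need to check that the analogous telescoping holds for $\fmbs/\mbs$: that composing the level-by-level reductions does not multiply the loss. If the monotone LP has the same Carathéodory-type sparsity bound and the composed-function decomposition respects the monotone-direction constraint (both of which I expect to hold), then setting $p(n)$ to be this single sparsity bound $q(n)$ (or a small polynomial in it) should finish the proof.
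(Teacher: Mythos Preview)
Your high-level plan (follow Tal, work level by level, bound the number of ``outer'' blocks by a function of $n$, then round) is the same as the paper's. But the concrete rounding step you propose does not work, and the ``telescoping'' you allude to hides exactly the nontrivial content of the proof.

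The specific gap is your step (ii). You say: take the $\leq q(n)$ blocks in the support of an optimal $\mathsf{fmbs}(f^{l+1})$ solution, pick the heaviest one (weight $\geq \mathsf{fmbs}(f^{l+1})/q(n)$), and spread it into that many disjoint monotone-sensitive blocks for $f^{l+1}$. But a single outer block $B_j$ with weight $w_j$ only tells you that $\mathsf{fmbs}(f^l,x^i)\geq w_j$ for each $i\in B_j$; to produce $\lfloor w_j\rfloor$ genuinely disjoint inner blocks you would need $\mathsf{mbs}(f^l,x^i)\geq w_j$, which is exactly what you do not have. So the largest-weight trick gives you nothing without already controlling the ratio at level $l$.

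What the paper actually does is make this dependence explicit and recursive. Writing $r_l=\min_z \mathsf{mbs}^z(f^l)/\mathsf{fmbs}^z(f^l)$, one multiplies \emph{all} the outer-LP weights by $r_l$ and takes floors; the resulting integer weights are now feasible against the constraints $\mathsf{mbs}(f^l,\hat{x}^i)$ (this is where the ratio enters), and flooring costs at most the number of outer minimal blocks, i.e.\ at most $2^n$. This yields $\mathsf{mbs}^z(f^{l+1})\geq r_l\,\mathsf{fmbs}^z(f^{l+1})-2^n$, hence $r_{l+1}\geq r_l\bigl(1-2^n/\mathsf{fmbs}^z(f^{l+1})\bigr)$. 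The losses do \emph{not} telescope away for free: one must show separately that for non-monotone $f$ with $\mathsf{mbs}(f)\geq 2$ the sequence $\mathsf{mbs}^z(f^l)$ grows (at least like $2^{\lfloor l/2\rfloor}$), so that the product $\prod_l(1-2^n/\mathsf{mbs}^{z'}(f^l))$ converges to a positive constant. That growth lemma, together with the case analysis for monotone $f$ and for $\mathsf{mbs}(f)=1$, is the missing engine in your outline.
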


As mentioned earlier, there is a nice implication of this behaviour (as shown in~\cite{KT16} for standard measures): given a measure $M$ which \emph{behaves well under composition} and an upper bound on monotone block sensitivity in terms of measure $M$, we can lift the same upper bound to fractional monotone block sensitivity.

\begin{corollary}\label{lift_intro}
     Let $f:\{0,1\}^{n}\rightarrow\{0,1\}$ be a Boolean function. Let $M(.)$ be a complexity measure such that for all $l\geq 2,$ $M(f^l)\leq M(f)M(f^{l-1})$. If $\mathsf{mbs}(f)\leq M(f)^{\alpha}$ then $\mathsf{fmbs}(f)=O(M(f)^{2\alpha})$. Furthermore, if $M(1-f)=O(M(f))$ then $\mathsf{fmbs}(f)=O(M(f)^{\alpha})$.
\end{corollary}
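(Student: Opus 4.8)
The plan is to combine the composition bound of \autoref{thm: ratio} with a standard tensor-power (amplification) trick. First I would apply \autoref{thm: ratio} to get, for every $l\geq 1$, the inequality $\mathsf{fmbs}(f^{l})\leq p(n)\cdot\mathsf{mbs}(f^{l})$, where $p(n)$ depends only on $n$ (the arity of $f$, which is fixed throughout). The key observation is that $\mathsf{mbs}$ and $\mathsf{fmbs}$ are both \emph{supermultiplicative} under composition — that is, $\mathsf{mbs}(f^{l})\geq \mathsf{mbs}(f)^{l}$ and likewise $\mathsf{fmbs}(f^{l})\geq \mathsf{fmbs}(f)^{l}$ — since a set of disjoint minimal sensitive blocks (or a fractional analogue) for the inner copies can be combined across levels. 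This is the monotone version of the standard fact for $\bs$ and $\fbs$ used in~\cite{Tal13,KT16}, and I would state it as a short preliminary lemma (or cite it from the preliminaries).

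Next I would use the hypothesis on $M$. Since $M(f^{l})\leq M(f)\,M(f^{l-1})$ for all $l\geq 2$, an easy induction gives $M(f^{l})\leq M(f)^{l}$. Combining this with the assumed bound $\mathsf{mbs}(f)\leq M(f)^{\alpha}$ is not immediately enough, because that bound is only hypothesized for $f$ itself, not for its powers; so instead I apply the bound to the function $f^{l}$ directly only through $\mathsf{mbs}(f^{l})\le \mathsf{mbs}(f)^{?}$ — wait, the direction is the other way. The clean route: from supermultiplicativity and \autoref{thm: ratio},
\[
\mathsf{fmbs}(f)^{l}\;\leq\;\mathsf{fmbs}(f^{l})\;\leq\;p(n)\cdot\mathsf{mbs}(f^{l}).
\]
Now I need an upper bound on $\mathsf{mbs}(f^{l})$ in terms of $M$. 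Here I would invoke that $\mathsf{mbs}$ is \emph{submultiplicative} as well, i.e.\ $\mathsf{mbs}(f^{l})\leq \mathsf{mbs}(f)^{l}$ — this is the companion fact (a monotone certificate/block-sensitivity decomposition argument, analogous to $\bs(f^l)\le \bs(f)^l$), so that $\mathsf{mbs}(f^{l})=\mathsf{mbs}(f)^{l}$ exactly. Then $\mathsf{fmbs}(f)^{l}\leq p(n)\,\mathsf{mbs}(f)^{l}\leq p(n)\,M(f)^{\alpha l}$, and taking $l$-th roots and letting $l\to\infty$ yields $\mathsf{fmbs}(f)\leq M(f)^{\alpha}$, hence $\mathsf{fmbs}(f)=O(M(f)^{\alpha})$. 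For the first, weaker conclusion $\mathsf{fmbs}(f)=O(M(f)^{2\alpha})$, I would avoid needing the exact multiplicativity and instead note that $\mathsf{mbs}(f)\le M(f)^\alpha$ together with the crude bound $\mathsf{fmbs}(g)\le \mathsf{mbs}(g)^2$ for all $g$ (stated in the introduction) already gives $\mathsf{fmbs}(f)\le M(f)^{2\alpha}$ directly — so in fact the $2\alpha$ bound needs neither the corollary's composition hypothesis nor \autoref{thm: ratio}; the point of the limiting argument is the improvement to $\alpha$ under the extra hypothesis $M(1-f)=O(M(f))$ (which is used to control $\mathsf{mbs}$ of powers symmetrically between $0$- and $1$-inputs, since $\mathsf{mbs}$ distinguishes the two).

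The main obstacle I anticipate is getting the submultiplicativity/supermultiplicativity of the monotone measures under composition precisely right, including the role of the $0$-sided versus $1$-sided versions of $\mathsf{mbs}$: unlike $\bs$, the monotone block sensitivity is defined asymmetrically (blocks flipped in one direction only), so $\mathsf{mbs}(f^{l})$ may only decompose cleanly when one tracks $\mathsf{mbs}$ on $1$-inputs and $\mathsf{mbs}$ on $0$-inputs of the constituent functions separately — this is exactly where the hypothesis $M(1-f)=O(M(f))$ enters, to absorb the cost of switching sides. I would handle this by proving the composition inequalities at the level of $\mathsf{mbs}_0$ and $\mathsf{mbs}_1$, then using the hypothesis to recombine, mirroring how~\cite{KT16} handles the standard case; the rest is the routine $l\to\infty$ limiting argument.
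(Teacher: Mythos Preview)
Your overall shape (tensor-power amplification plus \autoref{thm: ratio}) is right, but the key step you rely on is a gap. You write that you would ``invoke that $\mathsf{mbs}$ is submultiplicative as well, i.e.\ $\mathsf{mbs}(f^{l})\leq \mathsf{mbs}(f)^{l}$ --- this is the companion fact (a monotone certificate/block-sensitivity decomposition argument, analogous to $\bs(f^l)\le \bs(f)^l$).'' But this is not a known fact: even in the standard setting, $\bs(f\circ g)\le \bs(f)\bs(g)$ is open; what is known is only $\bs(f\circ g)\le \fbs(f)\bs(g)$, and the paper proves only the analogous $\mathsf{mbs}(f\circ g)\le \fbs(f)\,\mathsf{mbs}(g)$ (\autoref{theo: mbs of composition}). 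So you cannot bound $\mathsf{mbs}(f^{l})$ by $\mathsf{mbs}(f)^{l}$, and your chain $\mathsf{fmbs}(f)^l\le p(n)\,\mathsf{mbs}(f)^l\le p(n)\,M(f)^{\alpha l}$ does not go through. A second, related issue: you assert $\mathsf{fmbs}(f^{l})\ge \mathsf{fmbs}(f)^{l}$, but the supermultiplicativity in the paper (\autoref{fmbs composes}) is only for the one-sided quantity $\mathsf{fmbs}^{0}$; the two-sided version needs the asymmetry handled explicitly, not just ``precisely''.

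The paper sidesteps both problems in a way you almost saw but then discarded: it reads the hypothesis $\mathsf{mbs}\le M^{\alpha}$ as holding for all Boolean functions (in particular for each $f^{l}$), so the upper bound on $\mathsf{mbs}(f^{l})$ comes for free as $\mathsf{mbs}(f^{l})\le M(f^{l})^{\alpha}\le M(f)^{l\alpha}$, using only the \emph{given} submultiplicativity of $M$ --- never of $\mathsf{mbs}$. Working throughout with $\mathsf{fmbs}^{0}$ (which \emph{is} supermultiplicative), one gets $\mathsf{fmbs}^{0}(f)\le M(f)^{\alpha}$ by the limiting argument. The extra hypothesis $M(1-f)=O(M(f))$ then enters in one line, not inside the composition: since $\mathsf{fmbs}^{1}(f)=\mathsf{fmbs}^{0}(1-f)$, applying the same bound to $1-f$ gives $\mathsf{fmbs}^{1}(f)\le M(1-f)^{\alpha}=O(M(f)^{\alpha})$, and taking the max yields $\mathsf{fmbs}(f)=O(M(f)^{\alpha})$. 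For the weaker $2\alpha$ bound, the paper does \emph{not} use your shortcut via $\mathsf{fmbs}\le \mathsf{mbs}^{2}$ (which, as you note, would work); instead it runs the same amplification on $f^{2}$, using $\mathsf{fmbs}^{0}(f^{2})\ge \mathsf{fmbs}(f)$ from \autoref{mbs and fmbs of composition} to pass from the two-sided to the one-sided quantity at the cost of doubling the exponent.
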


We need another extra condition on $M$, $M(1-f) = O(M(f))$, as compared to Kulkarni and Tal~\cite{KT16}. However, most of the complexity measures should satisfy this condition trivially.


It is tempting to apply this corollary on $\log(\spar(f))$ and try to improve the upper bound on $\mathsf{fmbs}(f)$ in terms of $\log(\spar(f))$. We show a negative result here: $\log(\spar(f))$ does not behave well under composition, indeed the sparsity of a composed function $f \circ g$ can depend on the degree of $f$ which can be much larger than the logarithm of the sparsity.  Hence, Corollary~\ref{lift_intro} can not be used to improve the upper bound on $\mathsf{fmbs}(f)$. For the counterexample, please see Section~\ref{beh of monotone}.

Although our attempt to improve the bound on $\mathsf{fmbs}{(f)}$ did not bear success; we asked, is it possible to improve the log-sparsity upper bound on $\mathsf{mbs}(f)$?
For the question of improving the relation $\mathsf{mbs}(f) = O(\log(\spar(f))^2)$, we show that it will improve the upper bound on sensitivity in terms of degree (a central question in this field).

\begin{theorem}
\label{s(f)-deg}
   If there exists an $\alpha$ s.t. for every Boolean function $f:\zone^{n}\rightarrow\zone$, $\mathsf{mbs}(f)=O(\log^{\alpha}{\spar(f)})$, then for every Boolean function $f:\zone^{n}\rightarrow\zone$ $\s(f)=O(\deg^{\alpha}(f))$.
\end{theorem}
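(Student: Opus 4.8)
The plan is to prove the statement by a reduction: assuming the hypothesis holds for \emph{every} Boolean function (with a universal constant hidden in the $O(\cdot)$), I will derive $\s(f)=O(\deg^{\alpha}(f))$ for every $f$ by feeding a suitable iterated composition built from $f$ into the hypothesis. First I would normalize. Fix $f$, let $s:=\s(f)$ be attained at an input $z$ with sensitive coordinates $i_1,\dots,i_s$, and replace $f$ by $F(x):=f(x\oplus z)\oplus f(z)$. Complementing variables and the output changes neither $\s$ nor $\deg$, so it suffices to prove $\s(F)=O(\deg(F)^{\alpha})$; and now $F(0^n)=0$ while $F(e_{i_j})=1$ for all $j$. (I use the standard convention that $\mathsf{mbs}$ is measured at $0$-inputs by flipping coordinates from $0$ to $1$; with the dual convention one normalizes to $F(1^n)=1$ and the rest is symmetric. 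We may also assume $\deg(f)\ge 2$ — degree-$\le1$ functions are constants or literals with $\s\le 1$ — hence $s\ge 2$, since $s\le 1$ is immediate from $\deg(f)\ge 2$; and $\alpha\ge1$, which follows by applying the hypothesis to $\PARITY_m$, where $\mathsf{mbs}(\PARITY_m)=m$ but $\log\spar(\PARITY_m)=m-1$.)

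Next I would establish three facts about the $l$-fold composition $F^{l}$. (a) $\deg(F^{l})=\deg(F)^{l}$, since for each copy the degree-$\deg(F)$ monomials live on disjoint blocks of variables and cannot cancel. (b) $\spar(F^{l})\le \spar(F)\cdot\spar(F^{l-1})^{\deg(F)}$: each monomial of the outer copy of $F$ has degree at most $\deg(F)$, and substituting the inner polynomial into each of its variables (on disjoint variable blocks) turns it into a product of at most $\deg(F)$ polynomials each with at most $\spar(F^{l-1})$ monomials; unrolling this recursion and using $\deg(F)\ge 2$ gives $\log\spar(F^{l})\le \tfrac{2\log\spar(F)}{\deg(F)}\cdot\deg(F)^{l}$. (c) $\mathsf{mbs}(F^{l})\ge s^{l}$: the all-zero input $0^{n^{l}}$ is a $0$-input of $F^{l}$ (as $F(0^n)=0$), and for each leaf of the composition tree whose address $(j_1,\dots,j_l)$ lies in $[s]^{l}$, flipping just that leaf from $0$ to $1$ propagates up the tree — at every level the one affected inner copy sees a single $1$ in a sensitive coordinate and hence outputs $1$ — so $F^{l}$ evaluates to $1$; these $s^{l}$ singleton blocks are pairwise disjoint, so they witness $\mathsf{mbs}(F^{l})\ge s^{l}$.

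Applying the hypothesis to $F^{l}$, with $C$ the universal constant and noting $\spar(F)\ge 2$ (since $F$ is nonconstant and not a single monomial, as $F(e_{i_1})=F(e_{i_2})=1$ with $i_1\ne i_2$),
\[ s^{l}\ \le\ \mathsf{mbs}(F^{l})\ \le\ C\,\log^{\alpha}\spar(F^{l})\ \le\ C\Bigl(\tfrac{2\log\spar(F)}{\deg(F)}\Bigr)^{\alpha}\deg(F)^{l\alpha}. \]
Taking $l$-th roots, the factor $\bigl[C(2\log\spar(F)/\deg(F))^{\alpha}\bigr]^{1/l}$ is a fixed positive number (once $f$ is fixed) that tends to $1$ as $l\to\infty$; since the bound holds for every $l$, we conclude $\s(f)=\s(F)=s\le \deg(F)^{\alpha}=\deg(f)^{\alpha}$, i.e. $\s(f)=O(\deg^{\alpha}(f))$.

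The one step that needs genuine care is (c): formalizing the inductive ``flip one leaf'' argument against the precise definition of monotone block sensitivity and checking that the $s^{l}$ chosen singletons are simultaneously valid monotone sensitive blocks and pairwise disjoint, together with pinning down which convention ($0$- versus $1$-inputs) the paper uses for $\mathsf{mbs}$ so that the normalization matches. Facts (a) and (b) are routine multilinear-polynomial bookkeeping, and the limiting argument at the end is immediate once the constant in the hypothesis is understood to be universal.
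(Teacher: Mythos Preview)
Your proof is correct and takes a genuinely different route from the paper's.

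The paper's argument is a one-shot application: shift the witness for $\s(f)$ to the origin to get $\tilde f(x)=f(x\oplus w)$, note $\mathsf{mbs}(\tilde f)\ge \mathsf{mbs}(\tilde f,0^n)=\bs(\tilde f,0^n)\ge \s(f)$, apply the hypothesis once to $\tilde f$, and finish with the inequality $\log\spar(\tilde f)\le 3\deg(\tilde f)\le 3\deg(f)$. That last step is the crux, proved via Parseval's identity and integrality of the scaled Fourier coefficients (yielding $\spar(g)\le 8^{\deg(g)}$ for every Boolean $g$).

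You sidestep the Parseval lemma entirely with a tensor-power trick: apply the hypothesis to $F^{l}$ for every $l$, bound $\log\spar(F^{l})$ by the elementary recursion $\log\spar(F^{l})\lesssim \deg(F)\cdot \log\spar(F^{l-1})$, and let $l\to\infty$ to absorb all multiplicative constants. This is more machinery but is fully elementary (no Fourier analysis needed), and as a bonus it delivers the sharp constant $\s(f)\le \deg(f)^{\alpha}$ rather than merely $O(\deg(f)^{\alpha})$. A few small remarks: your fact (a) is never actually used---only (b) and (c) enter the final inequality; the phrase ``hence $s\ge 2$, since $s\le 1$ is immediate from $\deg(f)\ge 2$'' is garbled (you presumably mean that the case $s\le 1$ is trivially handled, so one may assume $s\ge 2$); and the observation $\alpha\ge 1$ via $\PARITY_m$ is not needed anywhere in your argument.
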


 The converse of this result follows from the proof of $\mathsf{mbs}(f) = O(\log(\spar(f))^2)$ in~\cite{KLMY21}. Nisan and Szegedy~\cite{NS94} showed that $\s(f) = O(\deg(f)^2)$. However, the best possible separation known is due to Kushilevitz (described in \cite{NW94}) giving a function $f$ such that $\s(f) = \Omega(\deg(f)^{1.63})$. So, our result implies that the best possible bound on monotone block sensitivity in terms of logarithm of sparsity cannot be better than $\mathsf{mbs}(f) = O(\log(\spar(f))^{1.63})$.



Going further, we ask if these bounds can be improved for a class of functions instead of a generic Boolean function? 
Buhrman and de Wolf \cite{buhrman01} proved that the log-rank conjecture holds when the outer function is monotone or symmetric. It turns out that all these monotone measures are equal for these classes of functions. 
\begin{theorem}
\label{thm:symm_monotone}
    If $f:\zone^{n}\rightarrow\zone$ is either symmetric or monotone, then
    $$ \ms(f) = \mathsf{mbs}(f) = \mathsf{fmbs}(f) = \hsc(f).$$
\end{theorem}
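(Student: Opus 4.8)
The plan is to reduce the four-way equality to a single inequality. For every Boolean $f$ one has the chain $\ms(f)\le \mathsf{mbs}(f)\le \mathsf{fmbs}(f)\le \hsc(f)$: the sensitive coordinates at an input form a family of pairwise disjoint singleton sensitive blocks, an integral matching is a feasible fractional matching, and LP duality identifies the fractional matching number of the sensitive-block hypergraph with its fractional cover number, which is at most the integral cover number $\hsc$. So it is enough to prove $\hsc(f)\le\ms(f)$ when $f$ is monotone and when $f$ is symmetric. Throughout I use the description of the measures at an input $z$ with $f(z)=1$ via the hypergraph of sensitive blocks $B\subseteq z^{-1}(1)$, i.e. blocks with $f(z^{B})=0$ where $z^{B}$ denotes $z$ with the coordinates of $B$ set to $0$: $\mathsf{mbs}(f,z)$, $\mathsf{fmbs}(f,z)$, $\hsc(f,z)$ are its matching, fractional matching and covering numbers, and $\ms(f,z)$ is the number of sensitive singletons; for $f$ monotone, inputs with $f(z)=0$ have no sensitive blocks and contribute nothing.

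\emph{Monotone case.} Choose $z$ with $f(z)=1$ attaining $\hsc(f)$ and let $S\subseteq z^{-1}(1)$ be a minimum hitting set of the sensitive blocks at $z$, so $|S|=\hsc(f)$. Let $w$ be the input supported exactly on $S$; since $S\subseteq z^{-1}(1)$ we have $w\le z$. I claim $\ms(f,w)\ge |S|$, which finishes the case. First, $f(w)=1$: were $f(w)=0$, the block $z^{-1}(1)\setminus S$ would be sensitive at $z$ (deleting it from $z$ yields $w$) and disjoint from $S$, contradicting that $S$ is a hitting set. Second, every $i\in S$ is pivotal at $w$: by minimality of $S$ some sensitive block $B'$ at $z$ is missed by $S\setminus\{i\}$, so the input supported on $S\setminus\{i\}$ lies coordinatewise below $z^{B'}$, which has value $0$, and hence has value $0$ by monotonicity. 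Thus $\ms(f)\ge \ms(f,w)\ge |S|=\hsc(f)$. (If the monotone measures are defined by flipping $0$s to $1$s instead, run the same argument on $1-f$.)

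\emph{Symmetric case.} Write $f(x)=f_{|x|}$ and let $K=\{k\in[n]:f_{k}\ne f_{k-1}\}$; if $K=\emptyset$ the function is constant and all four measures vanish, so assume $K\ne\emptyset$. Fix $z$ with $|z|=k$ and set $b_{k}=\min\{j\ge 1: f_{k-j}\ne f_{k}\}$ (if this set is empty there are no sensitive blocks at $z$). A block $B\subseteq z^{-1}(1)$ is sensitive exactly when $f_{k-|B|}\ne f_{k}$, so the minimal sensitive blocks are the $b_{k}$-subsets of $z^{-1}(1)$, and a set $S\subseteq z^{-1}(1)$ meets all sensitive blocks iff $|z^{-1}(1)\setminus S|\le b_{k}-1$; hence $\hsc(f,z)=k-b_{k}+1$ (similarly $\mathsf{mbs}(f,z)=\lfloor k/b_{k}\rfloor$ and $\mathsf{fmbs}(f,z)=k/b_{k}$, though these are not needed). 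By minimality of $b_{k}$, the values $f_{k-b_{k}+1},\dots,f_{k}$ agree while $f_{k-b_{k}}\ne f_{k-b_{k}+1}$, so $k-b_{k}+1\in K$ and therefore $\hsc(f,z)\le\max K$. Conversely $\ms(f,z)$ equals $k$ when $k\in K$ (flipping any single $1$ to $0$ then changes the value) and $0$ otherwise, so $\ms(f)=\max K$, and taking $k=\max K$, where $b_{k}=1$, shows the bound on $\hsc$ is attained. Hence $\hsc(f)=\max K=\ms(f)$.

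The formulaic parts (the chain of inequalities and the symmetric-case counts) are routine; the real content is the monotone case. The delicate point there is that $w$ must be shown simultaneously to be a $1$-input and to have every coordinate of $S$ pivotal, which genuinely uses the \emph{minimality} of the hitting set, and that it is legitimate to realise $\hsc(f)$ at a \emph{different} input $w$ than the $z$ we started from --- an inputwise identity $\hsc(f,z)=\ms(f,z)$ is false, e.g. for $\MAJORITY_{3}$ at $1^{3}$. One should also double-check that the treatment of constant/zero inputs and the direction of bit flips matches the conventions of the preliminaries.
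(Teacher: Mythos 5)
Your reduction to proving $\hsc(f)\le\ms(f)$ and your monotone-case argument are essentially the paper's own proof in mirror image: the paper works at an input $x$ with $f(x)=0$ ($f$ increasing), takes a minimum certificate $C$ for $f_{x}$ at $0^{n-|x|}$, sets every free coordinate outside $C$ to $1$, and uses minimality of $C$ plus monotonicity to make every coordinate of $C$ sensitive, exactly as you do with your minimum hitting set $S$ and the input $w$. Your symmetric case, done by explicitly computing $\hsc(f,z)=k-b_{k}+1$ and $\ms(f)=\max K$ from the level sequence, is correct and is a slightly more explicit route than the paper's argument (which shifts to an input where the whole complement is a minimum certificate and invokes symmetry); both are elementary and yield the same conclusion. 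The one point you must patch is the flip-direction convention: the paper defines all four monotone measures via $f_{x}$, i.e.\ the coordinates in $\supp(x)$ are fixed to $1$ and the measures are taken at $0^{n-|x|}$, so sensitive blocks consist of $0$-coordinates being flipped \emph{up}, the opposite of your convention of flipping coordinates of $z^{-1}(1)$ down. Your proposed repair, ``run the same argument on $1-f$,'' does not effect this translation: all four measures are invariant under output complementation, so $1-f$ has exactly the same monotone measures as $f$ and the flip direction is unchanged. The correct bridge is input complementation $g(x)=f(1-x_{1},\dots,1-x_{n})$, under which your measures of $f$ coincide with the paper's measures of $g$; for symmetric $f$ this is again symmetric, so that case transfers verbatim, while for monotone increasing $f$ it reverses monotonicity, so you should either compose with output complementation (the dual $1-f(1-x_{1},\dots,1-x_{n})$, again increasing and with unchanged measures) or simply rerun your argument at $0$-inputs with ``below'' replaced by ``above''---which is precisely the paper's version. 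With that one-line fix the proof is complete and matches the paper's.
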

 This implies an upper bound of $O(\log^2(\spar(f)))$ on $\hsc$ for these functions. For symmetric functions, this bound can be improved to $O(\log \spar(f))$ by combining the above relation with the upper bound on communication complexity for the corresponding AND-functions \cite{buhrman01}. Moreover, Buhrman and de Wolf \cite{buhrman01} showed that $\mathsf{mbs}(f) = \Omega(\log(\spar(f))/\log n)$, which implies that the upper bound is essentially tight.



\paragraph*{\textbf{Organization:}}
In Section~\ref{preliminaries} we recall the definitions of standard Boolean complexity measures as well as state their monotone analogues.
In Section~\ref{section:proof outline} we will give the proof ideas of our results. 
This section also contains the counterexample which shows that the relationship between $\mathsf{fmbs}{f}$ and $\log(\spar(f))$ can't be improved using this method (Section~\ref{beh of monotone}). Section~\ref{conclusion} contains the conclusion and some related open problems to pursue. 

Appendix~\ref{appendix:proof of theorem1} and Appendix~\ref{appendix:results for theorem1} contain the complete proof of Theorem~\ref{thm: ratio}. Appendix~\ref{mbs vs logspar} contains the equivalence between the problem of upper bounding $\mathsf{mbs}(f)$ in terms of $\log(\spar(f))$ and the well-studied problem of upper bounding $\s(f)$ in terms of $\deg(f)$ (Theorem~\ref{s(f)-deg}). Finally, Appendix~\ref{mbs-hsc} shows that for the common classes of symmetric and monotone Boolean functions $\hsc$ and $\mathsf{mbs}$ are the same (Theorem~\ref{thm:symm_monotone}). In Appendix~\ref{appendix:table}, we give an overview of the present scenario of the relationships between monotone measures.

\section{Preliminaries} \label{preliminaries}

For the rest of the paper, $f$ denotes a Boolean function $f:\zone^{n}\rightarrow\zone$ if not stated otherwise. We start by introducing the following notations that will be used in the paper:
\begin{itemize}
    \item $[n]$ denotes the $\{1,2,...,n\}$. For a set $C\subseteq[n]$, $|C|$ denotes its cardinality.
    \item For a string $x\in\zone^{n}$, its support is defined as $\supp(x):=\{i:x_{i}=1\}$ and $|x|:=|\supp(x)|$ denotes its Hamming weight .
    \item For a string $x\in\zone^{n}$, $x^{\oplus i}$ denotes the string obtained by flipping the $i^{th}$ bit of the string $x$.
    \item For a string $x\in\zone^{n}$ and a $B \subseteq[n]$, $x^{B}$ denotes the string obtained by flipping the input bits of $x$ that correspond to $B$.
    \item Every Boolean function $f:\zone^{n}\rightarrow\zone$ can be expressed as a polynomial over $\mathbb{R}$,
    $f(x)=\sum_{S\subseteq[n]}\alpha_{S}\prod_{i\in S}x_{i}.$
    The sparsity of $f$ is defined as $\spar(f):=|\{S\neq \emptyset:\alpha_{S}\neq 0\}|$ and the degree of $f$ is defined as $\deg(f):=\underset{S\subseteq [n]: \alpha_{S}\neq 0}{\max}|S|$. 
\end{itemize}



Having introduced the notations, we now recall the definitions of standard Boolean complexity measures. 
\begin{definition}[Sensitivity]
\label{defi: s}
For an input $x\in\zone^{n}$ the $i^{th}$ bit is said to be sensitive for $x$ if $f(x^{\oplus i}) \neq f(x)$. The sensitivity of $x$ w.r.t $f$ is defined as $$\s(f,x) := |\{i \in [n]: f(x^{\oplus i}) \neq f(x)\}|,$$ while the sensitivity of $f$ is defined as $$\s(f) := \underset{x\in\zone^{n}}{\max}\s(f,x).$$
\end{definition}

\begin{definition}[Block Sensitivity]
\label{defi: bs}
For an input $x\in\zone^n$, a subset $B\subseteq[n]$ is said to be a sensitive block for $x$ w.r.t $f$ if $f(x^B)\neq f(x).$ 
The block sensitivity of $f$ at $x$, denoted by $\bs(f,x)$, is defined as 
$$\bs(f,x)=\max\{k | \exists B_{1}, \dots ,B_{k} \textit{ with } B_{i}\cap B_{j}=\emptyset \textit{ for } i\neq j \textit{ and } f(x^{B_{i}})\neq f(x)  \}.$$

Block sensitivity of $f$ is defined as:
$$\bs(f):=\underset{x\in\{0,1\}^{n}}{\max} \bs(f,x).$$
\end{definition}

Fractional block sensitivity ($\fbs$) is obtained by allowing fractional weights on sensitive blocks.


\begin{definition}[Fractional Block Sensitivity]
\label{defi: fbs}
Let $W(f,x):=\{B\subseteq [n]:f(x^{B})\neq f(x)\}$ denote the set of all sensitive blocks for the input $x\in\{0,1\}^{n}$.
The fractional block sensitivity of f at x, denoted by $\fbs(f,x)$ is the value of the linear program:\\
$$\fbs(f,x):= \max \sum_{w\in W(f,x)}b_{w}$$\\
s.t. $$\forall i\in[n],\sum_{w\in W(f,x):i\in w}b_{w}\leq 1$$\\
and $$\forall w\in W(f,x),\; b_{w}\in [0,1].$$\\
The fractional block sensitivity of $f$ is defined as:
$$\fbs(f):=\underset{x\in\zone^{n}}{\max}\fbs(f,x).$$

\end{definition}
Note that restricting the linear program for $\fbs(f,x)$ to only integral values gives $\bs(f,x)$.
\begin{definition}[Certificate Complexity]\label{defi:C}
For a function $f$ and an input $x\in\{0,1\}^{n}$, a subset $C\subset[n]$ is said to be a certificate for $x$ if for all $y\in\{y\in\{0,1\}^{n}:\forall i\in C,\; x_{i}=y_{i}\}$ we have $f(x)=f(y)$.
For a function $f$ and an input $x \in \{0,1\}^{n}$ the certificate complexity of $f$ at $x$, denoted by $\cert(f,x)$, is defined as:
    $$\cert(f,x):=\underset{C:\text{$C$ is a certificate for $x$}}{\min}|C|.$$

    The certificate complexity of $f$ is defined as:
    $$\cert(f):=\underset{x\in\{0,1\}^{n}}{\max} \cert(f,x).$$
\end{definition}


The fractional measures $\fbs$ and $\fcert$ were introduced in \cite{Tal13}. There it was observed that for all $x\in\zone^{n}$ we have :
$\fbs(f,x)=\fcert(f,x)$
since the linear program for Fractional Certificate Complexity and Fractional Block Sensitivity are the primal-dual of each other and are also feasible.

For each of these standard measures, the analogous monotone versions can be defined by restricting functions $f$ to the positions in the support of a given input $x\in\{0,1\}^{n}$. Formally, for a function $f:\{0,1\}^{n}\rightarrow\{0,1\}$ and an input $x\in\{0,1\}^{n}$ let $f_{x}$ denote the function $f$ obtained by restricting $f$ to the set $\{y\in\{0,1\}^{n}:\forall i\in \supp(x),\; y_{i}=1\}$.

\begin{definition}[Monotone Sensitivity]
\label{defi:ms}
The monotone sensitivity for $x$ is defined as $\ms(f,x):=\s(f_{x},0^{n-|x|})$ while the monotone sensitivity for $f$ is defined as $$\ms(f):=\underset{x\in\zone^{n}}{\max}\ms(f,x).$$
\end{definition}
\begin{definition}[Monotone Block Sensitivity]
\label{defi: mbs}
The monotone block sensitivity of a function $f$ at an input $x\in\{0,1\}^{n}$ is defined as $\mathsf{mbs}(f,x):=\bs(f_{x},0^{n-|x|})$ while the monotone block sensitivity of $f$ is defined as:
$$\mathsf{mbs}(f)=\underset{x\in\{0,1\}^{n}}{\max}\mathsf{mbs}(f,x).$$
\end{definition}

Similar to block sensitivity, fractional block sensitivity of $f$ can be extended to the monotone setting by defining the linear program over the sensitive monotone blocks i.e. sensitive blocks containing only $0$'s.

\begin{definition}[Fractional Monotone Block Sensitivity]
\label{defi: fmbs}
For a function $f$ the fractional monotone block sensitivity at an input $x\in\{0,1,\}^{n}$ is defined as: $\mathsf{fmbs}(f,x):=\fbs(f_{x},0^{n-|x|})$ and the fractional monotone block sensitivity of $f$ is defined as:
$$\mathsf{fmbs}(f):=\underset{x\in\{0,1\}^{n}}{\max}\mathsf{fmbs}(f,x).$$
\end{definition}

Certificate complexity can also be extended to the monotone setting by counting only the zero entries in the certificate. The monotone analogue of certificate complexity was introduced in \cite{KLMY21} as hitting set complexity (it can be viewed as a hitting set for system of monomials). Formally,
\begin{definition}[Monotone Certificate Complexity/Hitting Set Complexity]
\label{defi: HSC}
For a function $f$ and an input $x\in\{0,1\}^{n}$ the hitting set complexity for $x$ is defined as:
$$\hsc(f,x):=\cert(f_{x},0^{n-|x|}),$$
while the hitting set complexity of the function $f$ is defined as:
$$\hsc(f):=\underset{x\in\{0,1\}^{n}}{\max} \hsc(f,x).$$
\end{definition}

Since $\bs$ allows only integer solutions to $\fbs$ linear program, and $\cert$ only allows integer solutions to the dual linear program~\cite{Tal13},
$$\bs(f_{x},0^{n-|x|})\leq \fbs(f_{x},0^{n-|x|}) \leq \cert(f_{x},0^{n-|x|}),$$
By similar arguments,
$$\mathsf{mbs}(f,x)\leq \mathsf{fmbs}(f,x) \leq \hsc(f,x).$$


Instead of taking maximum over all inputs, these measures can be defined for a certain output too. In other words, for a complexity measure $M \in \{\s, \bs, \fbs, \cert, \ms, \mathsf{mbs}, \mathsf{fmbs}, \hsc\}$ and $b \in \{0,1\}$,  $$M^b(f) := \underset{x\in f^{-1}(b)}{\max} M(f, x).$$

\section{Proof Outline}
\label{section:proof outline}

 First, we outline the ideas for the proofs of \autoref{s(f)-deg} and \autoref{thm:symm_monotone}. Subsequently, we will give proof outline for our main result, \autoref{thm: ratio}.


\paragraph*{Proof idea of Theorem \ref{s(f)-deg}}

We would like to prove that $\s(f)=O(\deg^{\alpha}(f))$ for any Boolean function $f$ (given that $\mathsf{mbs}(g)=O(\log^{\alpha}{\spar(g)})$ for all Boolean functions $g$). The idea is to convert $f$ into $\tilde{f}$ by shifting the point with maximum sensitivity to $0^n$; this transformation can only decrease the degree and $\mathsf{mbs}(f)$ is higher than $\s(\tilde{f})$.

The rest is accomplished by using the fact that sparsity is at most exponential in degree. This is shown for Boolean functions with $\{-1,1\}$ domain first using Parseval's identity, and then it can be translated for Boolean functions with $\{0,1\}$ domain.

For the interest of space we will present the proof of \autoref{s(f)-deg} in Appendix~\ref{mbs vs logspar}.

\paragraph*{Proof idea of Theorem \ref{thm:symm_monotone}}
We deal with the cases of monotone and symmetric boolean functions separately.

For monotone boolean functions, the idea for showing equality between the monotone versions of the standard boolean complexity measures is similar to the approach used for the standard complexity measures i.e. we consider a string $x$ which achieves the hitting set complexity $\hsc(f)=\hsc(f,x)$ with $C$ as one of its witness. 
Now, using $x$ and $C$ we construct another input $x^{'}$ with $\supp(x)\subseteq \supp(x^{'})$ and $\supp(x^{'})\cap C=\phi$ s.t. every bit $i\in C$ is sensitive for $f_{x}$ at $x^{'}$. Hence leading to $\hsc(f,x)\leq \ms(f,x^{'})\leq \ms(f)$.

Now for the case of symmetric boolean functions, we show that there exists an input $z\in\{0,1\}^{n}$ s.t. $\hsc(f)=\hsc(f,z)$ and $\hsc(f,z)=n-|z|$, where $|z|$ is the Hamming weight of $z$ i.e. $|z|=\supp(f)$. But this implies $\ms(f,z)=\s(f_{z},0^{n-|z|})=n-|z|=\hsc(f,z)=\hsc(f)$. 

We will give a complete proof in Appendix~\ref{mbs-hsc}.

\subsection{$\mathsf{fmbs}$ versus $\mathsf{mbs}$}\label{beh of monotone}


Let us move to the proof idea of \autoref{thm: ratio}, we essentially follow the same proof outline as~\cite{Tal13}. 
\autoref{thm: ratio} proves that for any $f:\zone^n \rightarrow \zone$, the ratio $\frac{\mathsf{fmbs}(f^l)}{\mathsf{mbs}(f^l)}$ is bounded above by a function of just $n$ (and independent of $l$). In other words, composition make $\mathsf{fmbs}$ and $\mathsf{mbs}$ equal in the asymptotic sense.  

We will be considering the case of monotone functions and non-monotone functions separately. While the case for monotone functions is handled easily due to \autoref{thm:symm_monotone} (we have a stronger relation $\frac{\mathsf{fmbs}(f^{l})}{\mathsf{mbs}(f^{l})}=1$), most of the work is done for the case when $f$ is non-monotone.

~~\\

\paragraph*{Proof outline of Theorem \ref{thm: ratio}} ~~\\

From the discussion above, assume that $f:\zone^n \rightarrow \zone$ is a non-monotone Boolean function. We want to show that 
$$ \mathsf{mbs}(f^{l})\geq p(n) ~\mathsf{fmbs}(f^{l}), $$
for some function $p(n)$ and big enough $l$. 

Similar to $\fbs(f)$, $\mathsf{fmbs}(f)$ can also be written as a fractional relaxation (linear program) of an integer program for $\mathsf{mbs}(f)$. The proof converts a feasible solution of the linear program for $\mathsf{fmbs}(f^{l+1})$ into a feasible solution of $\mathsf{mbs}(f^{l+1})$ without much loss in the objective value, bounding $\frac{\mathsf{mbs}(f^{l+1})}{\mathsf{fmbs}(f^{l+1})}$ in terms of $\frac{\mathsf{mbs}(f^{l})}{\mathsf{fmbs}(f^{l})}$: 
 \begin{equation} \label{eq:bounded_ratio}
\mathsf{mbs}(f^{l+1})\geq \mathsf{fmbs}(f^{l+1})\frac{\mathsf{mbs}(f^{l})}{\mathsf{fmbs}(f^{l})}\alpha_l,
\end{equation}
where $\alpha_{l}$ s.t. $\prod_{l=1}^{\infty}\alpha_{l}=\Omega(1)$. This finishes the proof by taking large enough $l$. We are left with proving \autoref{eq:bounded_ratio} for some $\alpha_l$'s. 


Remember, the idea is to convert a solution of $\mathsf{fmbs}(f^{l+1})$ into a solution of $\mathsf{mbs}(f^{l+1})$. Let $x:=(x^{1},x^{2},...,x^{n})\in\{0,1\}^{n^{l+1}}$ be the input s.t. $\mathsf{fmbs}(f^{l+1})=\mathsf{fmbs}(f^{l+1},x)$ where $x^{1},x^{2},...,x^{n}\in\{0,1\}^{n^{l}}$. The input $y\in\zone^{n}$ be the $n$-bit string corresponding to $x$ i.e.$\; \forall\; i\in[n],\; y_{i}:=f^{l}(x^{i})$. We know that $f^{l+1}(x) = f(y)$.

Let $\{B_{1},...,B_{k}\}$ be the set of all minimal monotone blocks for $f$ at $y$. A minimal monotone block, say $B = \{i_1,i_2,\cdots,i_k\}$, of $y$ gives minimal monotone blocks for $f^l$ at inputs $x^{i_1},x^{i_2}, \cdots, x^{i_k}$. Observe that the total weight contributed by any block $B_i$ in the linear program for $f^{l+1}$ will become feasible for the following linear program:
$$\max \sum_{i=1}^{k}w_{i},$$
s.t.
$$\sum_{j:i\in B_{j}}w_{j}\leq \mathsf{fmbs}(f^{l},x^{i}),\; \forall\; i\in[n],$$
$$w_{j}\geq 0, \forall j\in[k].$$

A small modification to these weights (multiplying by a quantity closely related to $\frac{\mathsf{mbs}(f^{l})}{\mathsf{fmbs}(f^{l})}$ and taking their integer part) gives the solution of the following integer program (notice that $\mathsf{mbs}(f^l)$ is taken over another suitable input $\hat{x}$):
$$\max \sum_{i=1}^{k}w_{i},$$
s.t.
$$\sum_{j:i\in B_{j}}w_{j}\leq \mathsf{mbs}(f^{l},\hat{x}^{i}),\; \forall\; i\in[n],$$
$$w_{j}\in \{0,1,2, \dots,\mathsf{mbs}(f^{l})\}, \forall j\in[k].$$

Let $\{w^{'}_i\}$ be the solution of the program above. Using this assignment $w^{'}_{i}$ we can construct $\sum_{i=1}^{k}w^{'}_{i}$ many disjoint monotone sensitive blocks of $f^{l+1}$ (see Appendix~\ref{appendix:proof of theorem1} for this construction). 

It can be shown that the objective value of the obtained solution satisfies, 
$$\mathsf{mbs}(f^{l+1}) \geq \mathsf{fmbs}(f^{l+1}) \frac{\mathsf{mbs}(f^{l})}{\mathsf{fmbs}(f^{l})}  - 2^{n}.$$
Here, the term $2^n$ appears because we take the integer part of a fractional solution to construct $\{w^{'}_i\}$. This inequality can be converted into \autoref{eq:bounded_ratio} by using properties of composition of fractional monotone block sensitivity and some minor assumptions on $\mathsf{mbs}(f)$. We present the complete proof of \autoref{thm: ratio} in Appendix~\ref{appendix:proof of theorem1}.

\paragraph*{Implications of \autoref{thm: ratio}} ~~\\

 One of the reason \autoref{thm: ratio} is interesting because it provides a way of lifting upper bounds on $\mathsf{mbs}(f)$ to upper bounds on $\mathsf{fmbs}(f)$. This was observed by \cite{KT16} for the standard setting ($\bs$ and $\fbs$), using which they showed the quadratic relation between $\fbs(f)$ and $\deg(f)$ i.e. $\fbs(f)=O(\deg^2(f))$. This was an improvement over $\fbs(f)\leq \cert(f)=O(\deg^{3}(f))$~\cite{fbsdeg}.

Similarly, we can do the lifting for $\mathsf{mbs}(f)$ and $\mathsf{fmbs}(f)$ which we have stated in \autoref{lift_intro}. We present the proof of \autoref{lift_intro} in \autoref{appendix:proof of theorem1}.

We now give an example showing that $\log \spar(f^2)$ may be exponentially larger than $(\log \spar(f))^2$ and so \autoref{lift_intro} cannot be applied to $\log \spar$. For any Boolean functions $f$ and $g$, $\spar(f \circ g) \geq (\spar(g)-1)^{\deg(f)}$ (see, for instance, \cite{loff19}\footnote{Their proof is stated for sparsity in the Fourier representation, but is readily seen to work for block composition of arbitrary multilinear polynomials.}). In particular, when $\spar(g) \geq 3$, $\log \spar(f \circ g) \geq \deg(f)$.
So any function $f$ satisfying $\spar(f) \geq 3$ and $\deg(f) = 2^{\Omega(\log \spar(f))}$ gives us the desired separation. For instance, we may take, \\
$$f(x_1, x_2, \dots, x_{n}) = \OR(\AND(x_1, x_2, \dots, x_{n/2}), \AND(x_{n/2+1}, x_{n/2+2}, \dots, x_{n}))$$ which has degree $n$ and sparsity only $3$.


\section{Conclusion}
\label{conclusion}

In the present work we studied the behaviour of different monotone complexity measures and their relation with one another. The relations between these measures are natural questions by themselves; on top of that, they can potentially be used to improve the upper bound on deterministic communication complexity in terms of logarithm of sparsity.

To summarize our results, we were able to show a better upper bound on $\hsc(f)$ in terms of $\log(\spar(f))$ for monotone and symmetric Boolean functions. It will be interesting to find other class of functions for which the upper bound can be improved. Our result that the $\mathsf{mbs}$ vs. $\log(\spar)$ question is equivalent to the $\s$ vs. $\deg$ question, might give another direction to attack this old open question. 

This work also showed that the ratio $\frac{\mathsf{fmbs}(f^{l})}{\mathsf{mbs}(f^{l})}$ is independent of the iteration number $l$. Even though we were not able to use it to show $\mathsf{fmbs} = O(\log (\spar(f))^2)$, this results seems to be of independent interest in terms of behavior of these monotone measures. 


Some of the other open questions from this work are listed below.

\begin{open question}
Can we prove that for any Boolean function $f:\zone^n \rightarrow \zone$, $\mathsf{fmbs}(f) = O(\log (\spar(f))^2)$?
\end{open question}

Another possible open question in this direction is asking about best possible separation between $\mathsf{fmbs}$ and $\log(\spar)$. Right now it is known that for all Boolean function $\mathsf{fmbs}(f) = O(\log(\spar(f))^4)$ and the best known separation is due to Kushilevitz (described in \cite{NW94}), giving a function $f$ such that $\s(f) = \Omega(\deg(f)^{1.63})$. Can we give a better separation for monotone measures? 

\begin{open question}
  Does there exist a function $f$ for which, $\mathsf{fmbs}(f) = \Omega(\log(\spar(f))^{\alpha})$ for some $\alpha > 1.63$? 
\end{open question}
                                                           
 







\bibliography{arxiv}
\appendix
\section{Proof of ~\autoref{thm: ratio} and \autoref{lift_intro} }
\label{appendix:proof of theorem1}

\begin{proof}[Proof of ~\autoref{thm: ratio}]

     If $f:\zone^{n}\rightarrow \zone$  is monotone then $f^l$ is also monotone. Hence from ~\autoref{thm:symm_monotone} it follows that $\mathsf{mbs}(f^l) = \hsc(f^l)$ which gives $\mathsf{mbs}(f^l) = \mathsf{fmbs}(f^l)$.

    Now we consider $f$ to be a non-monotone Boolean function. We consider two sub cases  $\mathsf{mbs}(f)=1$ and $\mathsf{mbs}(f)\geq 2$ separately. The sub case of $f$ being non-monotone with $\mathsf{mbs}(f)=1$ does not arise in the proof of $\fbs$ and $\bs$ ratio. This is because $\bs(f)\geq 2$ for every non-monotone function $f$. 
    
    If $\mathsf{mbs}(f)=1$ and if $\mathsf{mbs}(f^{l})=1$ for all $l\geq 1$ then using the fact that $\mathsf{fmbs}(f)=O(\mathsf{mbs}^{2}(f))$ we get $\mathsf{fmbs}(f^{l})/\mathsf{mbs}(f^{l})=O(1)$. If the aforementioned condition does not hold i.e. there exists a $k\in\mathbb{N}$ s.t. $\mathsf{mbs}(f^{k})\geq 2$ then what remains to show is that $\mathsf{fmbs}(f^{l})/\mathsf{mbs}(f^{l})\leq p(n)$ for all $l\geq k$. It follows that the argument for this part is similar to the case when $f$ is non monotone and $\mathsf{mbs}(f)\geq 2$.

    To prove the theorem for non-monotone functions and $\mathsf{mbs}(f)\geq 2$, we will need several lemmas about the behaviour of these monotone complexity measures under composition.

\begin{lemma}[\cite{Tal13}]
\label{tal's}
For Boolean functions $f:\zone^{n}\rightarrow\zone$ and $g:\zone^{m}\rightarrow\zone$, if $f(z^{n})=g(z^{m})=z$ for $z\in\zone$ then:
$$\fbs(f\circ g,z^{nm})\geq\fbs(f,z^{n})\fbs(g,z^{m}).$$
\end{lemma}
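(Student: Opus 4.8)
The plan is to use the linear-programming characterization of fractional block sensitivity directly: I would combine optimal fractional solutions for $f$ at $z^n$ and for $g$ at $z^m$ into a feasible fractional solution for $f\circ g$ at $z^{nm}$ whose objective value is exactly the product. First I would fix optimal solutions $\{b_w\}_{w\in W(f,z^n)}$ and $\{c_v\}_{v\in W(g,z^m)}$, so that $\sum_{w\ni i}b_w\le 1$ for every $i\in[n]$ with $\sum_w b_w=\fbs(f,z^n)$, and $\sum_{v\ni j}c_v\le 1$ for every $j\in[m]$ with $\sum_v c_v=\fbs(g,z^m)$. Throughout I identify an input of $f\circ g$ with an element of $\zone^{[n]\times[m]}$ whose $i$-th inner block is its restriction to $\{i\}\times[m]$, and $z^{nm}$ with the constant string: since $g(z^m)=z$ each inner block of $z^{nm}$ evaluates to $z$, and since $f(z^n)=z$ we get $(f\circ g)(z^{nm})=z$.

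Next I would build the combined blocks. For $w\in W(f,z^n)$ and $v\in W(g,z^m)$ set $D_{w,v}:=w\times v\subseteq[n]\times[m]$, i.e.\ the block that flips exactly the coordinates of $v$ inside \emph{every} inner copy indexed by $w$. The one verification that matters is that $D_{w,v}$ is a sensitive block for $f\circ g$ at $z^{nm}$: flipping $D_{w,v}$ turns the $i$-th inner block into $(z^m)^v$ for $i\in w$, which $g$ maps to $1-z$ because $v$ is sensitive for $g$ at $z^m$, and leaves it equal to $z^m$ for $i\notin w$; hence the vector of inner values becomes $(z^n)^w$, which $f$ maps to $1-z$ because $w$ is sensitive for $f$ at $z^n$, so $(f\circ g)\big((z^{nm})^{D_{w,v}}\big)=1-z\neq z$. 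Since sensitive blocks are nonempty, $D_{w,v}=w\times v$ recovers $w$ and $v$ uniquely, so the map $(w,v)\mapsto D_{w,v}$ is injective.

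Then I would assign weight $b_w c_v$ to $D_{w,v}$ and weight $0$ to every other sensitive block of $f\circ g$, and check feasibility. Each weight lies in $[0,1]$ since $b_w,c_v\in[0,1]$. For a coordinate $(i_0,j_0)\in[n]\times[m]$, the blocks $D_{w,v}$ containing it are precisely those with $i_0\in w$ and $j_0\in v$, so the packing constraint factorizes:
$$\sum_{\substack{w\ni i_0\\ v\ni j_0}} b_w c_v=\Big(\sum_{w\ni i_0}b_w\Big)\Big(\sum_{v\ni j_0}c_v\Big)\le 1\cdot 1=1.$$
The objective value of this feasible solution is $\big(\sum_w b_w\big)\big(\sum_v c_v\big)=\fbs(f,z^n)\,\fbs(g,z^m)$, and since $\fbs(f\circ g,z^{nm})$ is the optimum of its own linear program, the claimed inequality follows.

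The argument is short, so there is no single hard step; the only place that needs care is the \emph{choice} of the combined block. One must flip the same sensitive block $v$ of $g$ inside every copy indexed by $w$ (rather than in a single representative copy, or with different inner blocks in different copies): this is exactly what forces the vector of inner values to equal $(z^n)^w$ and hence forces $f$ to flip, and it is also what makes the packing constraints of $f\circ g$ split as a product of those of $f$ and of $g$, so that the two objectives multiply cleanly.
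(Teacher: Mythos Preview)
Your argument is correct and is essentially the standard proof: build product blocks $D_{w,v}=w\times v$, verify they are sensitive for $f\circ g$ at $z^{nm}$ using the fixed-point hypothesis $f(z^n)=g(z^m)=z$, assign product weights $b_w c_v$, and observe that both the packing constraints and the objective factor. Note, however, that the paper does not actually prove this lemma; it is quoted from \cite{Tal13} without proof and only used as a black box (in the proof of \autoref{fmbs composes}), so there is no in-paper argument to compare against.
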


The above observation can be adapted to $\mathsf{fmbs}^{0}$.
\begin{lemma}\label{fmbs composes}
If $f:\zone^{n}\rightarrow\zone$ and $g:\zone^{m}\rightarrow\zone$ are Boolean functions with $\mathsf{fmbs}^{0}(f)=\mathsf{fmbs}(f,x)$ and $\mathsf{fmbs}^{0}(g)=\mathsf{fmbs}(g,y)$ then:
$$\mathsf{fmbs}^{0}(f\circ g)\geq \mathsf{fmbs}^{0}(f)\mathsf{fmbs}^{0}(g).$$
\end{lemma}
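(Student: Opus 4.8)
\textbf{Proof proposal for Lemma~\ref{fmbs composes}.}

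The plan is to take optimal fractional solutions for $\mathsf{fmbs}^{0}(f)$ and $\mathsf{fmbs}^{0}(g)$ and combine them into a feasible fractional solution for the composed function $f\circ g$ evaluated at the natural ``composed'' zero-type input, so that the objective value multiplies. First I would set up the inputs carefully: let $x\in f^{-1}(0)$ be the input witnessing $\mathsf{fmbs}^{0}(f)=\mathsf{fmbs}(f,x)$ and $y\in g^{-1}(0)$ the input witnessing $\mathsf{fmbs}^{0}(g)=\mathsf{fmbs}(g,y)$. The key point is that after restricting to the supports (i.e. passing to $f_x$ at $0^{n-|x|}$ and $g_y$ at $0^{m-|y|}$), sensitive monotone blocks are subsets of the zero-coordinates, so flipping them means flipping $0$'s to $1$'s. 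I would form the composed input $z = (y,y,\dots,y)$ with one copy of $y$ in each of the $n$ blocks (or more precisely, the copy $x\otimes y$ in the block structure of $f\circ g$), and check that $(f\circ g)(z)=f(g(y),\dots,g(y))=f(0^n)$-restricted appropriately equals $0$, matching the ``$0$'' superscript.

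The main construction: given an optimal fractional assignment $\{a_B\}_{B}$ on sensitive monotone blocks of $f$ at $x$ (with $\sum_B a_B = \mathsf{fmbs}^{0}(f)$ and $\sum_{B\ni i} a_B \le 1$ for each coordinate $i$) and similarly $\{c_{B'}\}_{B'}$ for $g$ at $y$, I would define, for each pair $(B,B')$ where $B$ is a sensitive monotone block of $f$ (a set of coordinates in $[n]$, i.e. a set of blocks) and $B'$ a sensitive monotone block of $g$, the composed block $B\ast B'$ that, in every block of $[n]$ indexed by $B$, flips the coordinates of $B'$. Assign it weight $a_B c_{B'}$. Two things need checking: (i) feasibility — for any coordinate $(i,j)$ of $f\circ g$ (block $i\in[n]$, position $j\in[m]$), the total weight through it is $\sum_{B\ni i, B'\ni j} a_B c_{B'} = \big(\sum_{B\ni i} a_B\big)\big(\sum_{B'\ni j} c_{B'}\big) \le 1\cdot 1 = 1$; (ii) each $B\ast B'$ is genuinely a sensitive monotone block for $(f\circ g)_z$ at the all-zeros restriction. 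For (ii), flipping $B'$ inside a block indexed by $i\in B$ changes that block's $g$-value from $0$ to $1$ (since $B'$ is $g$-sensitive at $y$), so the induced $n$-bit string flips exactly the coordinates in $B$ from $0$ to $1$, which changes $f$'s value since $B$ is $f$-sensitive at $x$ — monotonicity of the block structure (all flips are $0\to 1$) is what makes this clean. The objective then is $\sum_{B,B'} a_B c_{B'} = \mathsf{fmbs}^{0}(f)\,\mathsf{fmbs}^{0}(g)$, giving $\mathsf{fmbs}^{0}(f\circ g)\ge \mathsf{fmbs}(f\circ g, z) \ge \mathsf{fmbs}^{0}(f)\,\mathsf{fmbs}^{0}(g)$.

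The main obstacle I anticipate is bookkeeping around the restriction maps: one must be careful that ``sensitive monotone block of $f$ at $x$'' literally means ``sensitive block of $f_x$ at $0^{n-|x|}$'', and that when composing we are composing the restricted functions consistently — i.e. that $(f\circ g)_z$ for the chosen $z$ is exactly $f_x \circ g_y$ up to relabeling coordinates, so that the blocks $B\ast B'$ live in the right space and the superscript-$0$ condition is preserved. This is essentially the same argument as Lemma~\ref{tal's} of \cite{Tal13} with ``$0\to1$ flips only'' substituted throughout, so once the restriction bookkeeping is pinned down the inequality chain is routine; I would not expect to need any genuinely new idea beyond noting that the product of two fractional-relaxation solutions is again feasible with multiplicative objective, which is exactly the tensor/product structure exploited in Tal's proof.
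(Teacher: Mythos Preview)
Your approach is correct and essentially identical to the paper's: both reduce to the product-of-fractional-solutions construction underlying Tal's Lemma~\ref{tal's}, which the paper invokes as a black box and you unpack explicitly. The one detail you leave vague (and correctly flag as the main bookkeeping issue) is the choice of composed input---the paper sets block $i$ to a fixed $z\in g^{-1}(1)$ when $x_i=1$ and to $y$ when $x_i=0$, so that $(f\circ g)_\gamma$ restricted to its zero coordinates is exactly $f_x\circ g_y$; once that is pinned down, your argument goes through verbatim.
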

\begin{proof}
Consider the inputs $x\in\zone^{n}$ and $y\in\zone^{m}$ s.t. $\mathsf{fmbs}^{0}(f)=\mathsf{fmbs}(f,x)$ and $\mathsf{fmbs}^{0}(g)=\mathsf{fmbs}(g,y)$. As $f_{x}(0^{n-|x|})=g_{y}(0^{n-|y|})=0$ hence by \autoref{tal's} it follows that:
$$\mathsf{fmbs}(f_{x}\circ g_{y},\boldsymbol{0})\geq \mathsf{fmbs}^{0}(f)\mathsf{fmbs}^{0}(g),$$
where $\boldsymbol{0}$ is the all zero string in $\zone^{(n-|x|)(n-|y|)}$.

Fix any $z \in g^{-1}(1)$. (If $g$ is the constant $0$ function, then the lemma holds since $\mathsf{fmbs}^0(g) = 0$.) Now, consider the input $\gamma:=(\gamma_{1},\gamma_{2},...,\gamma_{m})$ with $\gamma_{1},...,\gamma_{n}\in\zone^{m}$ defined as:
$$\gamma_{i}:=\begin{cases}
    & z,\; if\; x_{i}=1\\
    & y,\; otherwise
\end{cases}$$
Observe that $\mathsf{fmbs}(f\circ g,\gamma)\geq \mathsf{fmbs}(f_{x}\circ g_{y},\boldsymbol{0})$, hence giving us the result:
$$\mathsf{fmbs}^{0}(f\circ g)\geq \mathsf{fmbs}^{0}(f)\mathsf{fmbs}^{0}(g).$$
\end{proof}

The remaining lemmas given below are proved in the Appendix~\ref{appendix:results for theorem1}.

\begin{lemma}
\label{mbs and fmbs of composition}
   Let $f, g$ be two Boolean function where $f$ is non-monotone and $z \in \{0,1\}$ then,\\
           1. $\mathsf{mbs}^z(f \circ g) \geq \mathsf{mbs}(g)$,\\
           2. $\mathsf{fmbs}^z(f \circ g) \geq \mathsf{fmbs}(g)$.
\end{lemma}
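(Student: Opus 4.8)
The plan is to prove both parts by a single embedding argument: take an optimal witness for the monotone measure of $g$ and plant it inside one ``slot'' of $f \circ g$, using a sensitive edge of $f$ to carry $g$'s sensitivity through to the output. First I would dispose of the degenerate case where $g$ is constant, since then $\mathsf{mbs}(g) = \mathsf{fmbs}(g) = 0$ and there is nothing to prove; so assume $g$ is non-constant and fix $z \in \zone$. Let $u \in \zone^{m}$ achieve $\mathsf{mbs}(g, u) = \mathsf{mbs}(g)$ (respectively $\mathsf{fmbs}(g,u) = \mathsf{fmbs}(g)$) and write $\epsilon := g(u)$. By definition $\mathsf{mbs}(g,u) = \mathsf{bs}(g_u, 0^{m-|u|})$, so there are pairwise disjoint $B_1, \dots, B_k \subseteq [m] \setminus \supp(u)$ with $k = \mathsf{mbs}(g)$ and $g(u^{B_t}) = 1-\epsilon$ for every $t$; for the fractional statement one instead has a fractional packing $\{\beta_B\}_{B}$ of such blocks, of total weight $\mathsf{fmbs}(g)$, satisfying $\sum_{B \ni i}\beta_B \le 1$ for each $i \in [m]\setminus\supp(u)$.

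Next I would pick the base point of $f$ into which the witness is planted. Since $f$ is non-monotone it has a negatively sensitive edge: a coordinate $j$ and a point $a$ with $a_{j}=0$, $f(a)=1$, $f(a^{\oplus j})=0$. I would moreover assume $f$ is not anti-monotone: that case is harmless in the application to \autoref{thm: ratio} because an anti-monotone $f$ makes $1-f^{l}$ monotone, so \autoref{thm:symm_monotone} already applies; hence $f$ also has a positively sensitive edge. A short case analysis on the pair $(z,\epsilon)$ — a negative edge does the job when $z \ne \epsilon$ and a positive edge when $z = \epsilon$ — then produces a coordinate $i^{*}$ and a point $c \in \zone^{n}$ with $c_{i^{*}} = \epsilon$, $f(c) = z$ and $f(c^{\oplus i^{*}}) = 1-z$.

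Now I would build the input $Y = (Y_{1},\dots,Y_{n}) \in \zone^{nm}$ by setting $Y_{i^{*}} := u$ and, for $r \ne i^{*}$, choosing $Y_{r}$ to be any point with $g(Y_{r}) = c_{r}$ (possible since $g$ is non-constant). Then $(f\circ g)(Y) = f(c) = z$. Each $B_{t}$, read as a block in the $i^{*}$-th slot, consists only of zero coordinates of $Y$ (as $B_t \subseteq [m]\setminus\supp(u)$ and $Y_{i^*}=u$), and flipping it to ones turns $Y_{i^{*}}$ into $u^{B_{t}}$, so $g(Y_{i^{*}})$ becomes $1-\epsilon$ and the output becomes $f(c^{\oplus i^{*}}) = 1-z \ne z$. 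Hence $B_{1},\dots,B_{k}$ are pairwise disjoint monotone sensitive blocks for $f\circ g$ at $Y$, giving $\mathsf{mbs}^{z}(f\circ g) \ge \mathsf{mbs}(f\circ g, Y) \ge k = \mathsf{mbs}(g)$. For part 2 the packing $\{\beta_B\}$ transfers verbatim: every block still lies in the $i^{*}$-th slot, so the capacity constraints at those coordinates are exactly those of $g$ at $u$, the objective value is unchanged, and $\mathsf{fmbs}^{z}(f\circ g) \ge \mathsf{fmbs}(g)$. This last step is essentially the monotone analogue of \autoref{tal's}/\autoref{fmbs composes}, specialized to varying a single argument of $f$.

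The only genuinely delicate point is forcing the prescribed output value $z$, i.e.\ choosing the base point $c$, and this is exactly where one needs $f$ to have sensitive edges in \emph{both} directions; with only a single-direction edge the statement can fail for one value of $z$ (e.g.\ $f(x)=1-x_{1}$), which is why the anti-monotone case must be peeled off first. Everything else — verifying disjointness, that the flipped coordinates are zero in $Y$, and feasibility of the transferred fractional solution — is routine bookkeeping.
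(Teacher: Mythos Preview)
Your proof is correct and follows essentially the same embedding argument as the paper's: take an optimal witness $u$ for the monotone measure of $g$, locate a Hamming-adjacent pair for $f$ whose $f$-values match the required pattern, plant $u$ in the corresponding slot while filling the remaining slots with arbitrary preimages of the needed $g$-values, and then transfer $g$'s disjoint monotone blocks (respectively, the fractional packing) directly to that slot of $f\circ g$. The paper carries out exactly this construction, treating the cases $g(x)=0$ and $g(x)=1$ separately and using two adjacent pairs $x^{1}<x^{2}$ and $x^{3}<x^{4}$ with $f(x^{1})=f(x^{4})=0$, $f(x^{2})=f(x^{3})=1$.

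Your explicit $(z,\epsilon)$ case split and your remark about anti-monotone $f$ are a little more careful than the paper's ``for ease of discussion let us assume\ldots'': the simultaneous existence of a positively and a negatively sensitive edge indeed requires $f$ to be neither increasing nor decreasing, not merely non-increasing. In the paper this is left implicit --- the monotone branch of the proof of \autoref{thm: ratio} (via \autoref{thm:symm_monotone}, whose proof begins ``WLOG $f$ is monotonically increasing'') already covers decreasing $f$ as well, so when \autoref{mbs and fmbs of composition} is invoked one may read ``non-monotone'' as ``neither increasing nor decreasing''. Your side remark peeling off the anti-monotone case is thus consonant with how the lemma is actually used, and otherwise the two arguments coincide.
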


\begin{lemma}\label{monotonicity under composition}
    For Boolean functions $f:\zone^{n}\rightarrow\zone$ and $g:\zone^m\rightarrow\zone$ we have:
$$\mathsf{mbs}^z(f\circ g)\geq \max\{\mathsf{mbs}^z(f)\mathsf{mbs}^0(g),\bs^z(f)\min\{\mathsf{mbs}^0(g),\mathsf{mbs}^1(g)\}\}.$$
\end{lemma}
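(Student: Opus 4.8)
The plan is to establish the two lower bounds appearing in the maximum separately; for each one I will simply \emph{construct} a particular input to $f\circ g$ that takes the value $z$ and carries the advertised number of pairwise disjoint sensitive monotone blocks. It is worth recalling what this amounts to: for a function $h$ on $N$ bits and an input $w$, $\mathsf{mbs}(h,w)=\bs(h_{w},0^{N-|w|})$ is the largest number of pairwise disjoint blocks contained in $[N]\setminus\supp(w)$ such that flipping any one of them from $0$ to $1$ flips the value of $h$. I write a composed input as $y=(y^{1},\dots,y^{n})$ with each $y^{i}\in\zone^{m}$, and identify the coordinates of the $i$-th block with $\{i\}\times[m]$. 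The recurring idea is to take a witness $a$ for the relevant (monotone or two-sided) block sensitivity of $f$, place in each block of $y$ a fixed $g$-preimage of $a_{i}$, and then, inside the blocks indexed by a sensitive block $B_{j}$ of $f$, ``activate'' the corresponding $g$-values using disjoint monotone blocks of $g$.

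For the bound $\mathsf{mbs}^{z}(f\circ g)\geq\mathsf{mbs}^{z}(f)\,\mathsf{mbs}^{0}(g)$: if $g^{-1}(1)=\emptyset$ then $\mathsf{mbs}^{0}(g)=0$ and there is nothing to prove, so fix $u\in g^{-1}(1)$. Pick $a\in f^{-1}(z)$ with $\mathsf{mbs}^{z}(f)=\mathsf{mbs}(f,a)$, witnessed by pairwise disjoint monotone blocks $B_{1},\dots,B_{p}\subseteq[n]\setminus\supp(a)$, $p=\mathsf{mbs}^{z}(f)$, and pick $b\in g^{-1}(0)$ with $\mathsf{mbs}^{0}(g)=\mathsf{mbs}(g,b)$, witnessed by pairwise disjoint monotone blocks $C_{1},\dots,C_{q}\subseteq[m]\setminus\supp(b)$, $q=\mathsf{mbs}^{0}(g)$. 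Define $y$ by $y^{i}:=b$ when $a_{i}=0$ and $y^{i}:=u$ when $a_{i}=1$, so $g(y^{i})=a_{i}$ for all $i$ and $(f\circ g)(y)=f(a)=z$. For $(j,l)\in[p]\times[q]$ put $D_{j,l}:=\bigcup_{i\in B_{j}}\bigl(\{i\}\times C_{l}\bigr)$. Then each $D_{j,l}\subseteq[nm]\setminus\supp(y)$, since $i\in B_{j}$ forces $a_{i}=0$, hence $y^{i}=b$, and $C_{l}\cap\supp(b)=\emptyset$; the $D_{j,l}$ are pairwise disjoint because the $B_{j}$'s are and the $C_{l}$'s are; and flipping $D_{j,l}$ replaces $y^{i}$ by $b^{C_{l}}$ (so $g(y^{i})$ becomes $1$) for $i\in B_{j}$ and changes nothing else, so the induced input to $f$ is $a^{B_{j}}$, of value $1-z$. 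This yields $pq$ pairwise disjoint sensitive monotone blocks, giving the first bound.

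For the bound $\mathsf{mbs}^{z}(f\circ g)\geq\bs^{z}(f)\min\{\mathsf{mbs}^{0}(g),\mathsf{mbs}^{1}(g)\}$ the construction is the same, except that I use the ordinary (two-sided) block sensitivity of $f$ and must move $g$-values in both directions. If $g^{-1}(0)$ or $g^{-1}(1)$ is empty the right-hand side is $0$, so fix $b_{0}\in g^{-1}(0)$ and $b_{1}\in g^{-1}(1)$ achieving $\mathsf{mbs}^{0}(g)$ and $\mathsf{mbs}^{1}(g)$, with monotone block families $(C^{0}_{l})_{l}$ and $(C^{1}_{l})_{l}$, and set $q:=\min\{\mathsf{mbs}^{0}(g),\mathsf{mbs}^{1}(g)\}$. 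Take $a\in f^{-1}(z)$ with $\bs^{z}(f)=\bs(f,a)$, witnessed by pairwise disjoint blocks $B_{1},\dots,B_{r}\subseteq[n]$, $r=\bs^{z}(f)$, and define $y$ by $y^{i}:=b_{a_{i}}$, so again $g(y^{i})=a_{i}$ and $(f\circ g)(y)=z$. For $(j,l)\in[r]\times[q]$ put $D_{j,l}:=\bigcup_{i\in B_{j}}\bigl(\{i\}\times C^{a_{i}}_{l}\bigr)$; as before these lie in $[nm]\setminus\supp(y)$ and are pairwise disjoint, and flipping $D_{j,l}$ moves $g(y^{i})$ from $a_{i}$ to the complementary value for each $i\in B_{j}$ (from $0$ to $1$ through a $C^{0}$-block, from $1$ to $0$ through a $C^{1}$-block), so the induced input to $f$ is again $a^{B_{j}}$, of value $1-z$. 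Hence $rq$ pairwise disjoint sensitive monotone blocks, and taking the larger of the two bounds proves the lemma.

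I do not expect a genuine obstacle here: the statement is a careful composition argument, so the points that really need attention are bookkeeping ones. The one place where the hypotheses get used is the verification that each constructed block $D_{j,l}$ lies inside the zero-positions of $y$; this is exactly why the witnessing blocks for $g$ must be \emph{monotone}, i.e.\ avoid the support of the base input, and it is why the second bound pays a $\min$ over $\mathsf{mbs}^{0}(g)$ and $\mathsf{mbs}^{1}(g)$ rather than the larger of the two. The degenerate cases ($f$ or $g$ constant, or a relevant preimage empty) are absorbed by noting that the corresponding side of the inequality is then $0$.
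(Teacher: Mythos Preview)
Your proof is correct and follows essentially the same construction as the paper: pick an input to $f$ witnessing $\mathsf{mbs}^{z}(f)$ (respectively $\bs^{z}(f)$), populate each block of the composed input with a fixed $g$-preimage of the corresponding bit chosen to witness $\mathsf{mbs}^{0}(g)$ or $\mathsf{mbs}^{1}(g)$, and combine the inner and outer sensitive blocks into product-form blocks $D_{j,l}$. The paper's own proof is terser---it says ``clearly'' and ``similarly'' where you spell out the disjointness and sensitivity checks and the edge cases---but the underlying idea is identical.
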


\begin{corollary}
\label{lemma: fmbs monotone increasing}
    Let $f$ be a non-monotone Boolean function with $z \in \{0,1\}$ then, the sequence $\{\mathsf{mbs}^{z}(f^l)\}_{l \in \N}$ is monotone increasing and if $\mathsf{mbs}(f)\geq 2$ then for every $z \in \{0,1\}$ the sequence $\{\mathsf{mbs}^{z}(f^l)\}_{l \in \N}$ tends to infinity.
\end{corollary}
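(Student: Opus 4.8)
\textbf{Proof plan for Corollary~\ref{lemma: fmbs monotone increasing}.}
The plan is to derive both assertions directly from Lemma~\ref{monotonicity under composition} and Lemma~\ref{mbs and fmbs of composition}, which are exactly the composition bounds we need. For the monotonicity claim, I would write $f^{l+1} = f \circ f^{l}$ and apply the first term inside the maximum in Lemma~\ref{monotonicity under composition}: $\mathsf{mbs}^z(f^{l+1}) \geq \mathsf{mbs}^z(f)\,\mathsf{mbs}^0(f^l)$. To convert this into a statement comparing $\mathsf{mbs}^z(f^{l+1})$ with $\mathsf{mbs}^z(f^l)$, I need $\mathsf{mbs}^z(f) \geq 1$, which holds as long as $f$ is non-constant (and a non-monotone function is certainly non-constant); and I need to pass between the $z = 0$ and $z = 1$ superscripts. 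Since $f$ is non-monotone, Lemma~\ref{mbs and fmbs of composition}(1) gives $\mathsf{mbs}^z(f \circ g) \geq \mathsf{mbs}(g) = \max\{\mathsf{mbs}^0(g), \mathsf{mbs}^1(g)\}$ for both values of $z$; applying this with $g = f^l$ yields $\mathsf{mbs}^z(f^{l+1}) \geq \mathsf{mbs}^{z'}(f^l)$ for both $z' \in \{0,1\}$, in particular $\mathsf{mbs}^z(f^{l+1}) \geq \mathsf{mbs}^z(f^l)$. That settles monotonicity of each of the two sequences.

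For the divergence claim when $\mathsf{mbs}(f) \geq 2$, I would iterate the second term in the maximum of Lemma~\ref{monotonicity under composition}. Write $f^{l+1} = f^l \circ f$ (or group the composition appropriately) and use $\mathsf{mbs}^z(f^{l} \circ f) \geq \bs^z(f^{l-1})\cdot\min\{\mathsf{mbs}^0(f),\mathsf{mbs}^1(f)\}$ — actually, more cleanly, I would use $f^{l+1} = f \circ f^l$ and the bound $\mathsf{mbs}^z(f \circ f^l) \geq \bs^z(f)\,\min\{\mathsf{mbs}^0(f^l), \mathsf{mbs}^1(f^l)\}$. The subtlety is to locate a multiplicative factor strictly greater than $1$: the hypothesis $\mathsf{mbs}(f) \geq 2$ means $\max\{\mathsf{mbs}^0(f), \mathsf{mbs}^1(f)\} \geq 2$, but to get a geometric blow-up I want both $\mathsf{mbs}^0$ and $\mathsf{mbs}^1$ of the composed function to be at least $2$ from some point on. Here the monotonicity just proved helps: combining it with Lemma~\ref{mbs and fmbs of composition}(1), once $\mathsf{mbs}(f^k) \geq 2$ for some $k$ we get $\mathsf{mbs}^z(f^{k+1}) \geq \mathsf{mbs}(f^k) \geq 2$ for \emph{both} $z$, and then $\min\{\mathsf{mbs}^0(f^l), \mathsf{mbs}^1(f^l)\} \geq 2$ for all $l \geq k+1$. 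Since $\mathsf{mbs}(f) \geq 2$ and $f$ is non-monotone we indeed have $\mathsf{mbs}(f^1) \geq 2$, so $k = 1$ works.

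With that in hand, I would combine it with the first term of Lemma~\ref{monotonicity under composition}: for $l \geq 2$, $\mathsf{mbs}^z(f^{l+1}) \geq \mathsf{mbs}^z(f)\,\mathsf{mbs}^0(f^l)$, and separately $\mathsf{mbs}^1(f^{l+1}) \geq \mathsf{mbs}^0(f^{l+1})$ and vice versa by the non-monotone swap from Lemma~\ref{mbs and fmbs of composition}(1). The cleanest route: show $\mathsf{mbs}^0(f^{l+2}) \geq 2\,\mathsf{mbs}^0(f^l)$ by chaining two composition steps — one step to guarantee the $\min\{\mathsf{mbs}^0,\mathsf{mbs}^1\}$ factor is at least $2$ and move to whichever superscript is convenient, and one step to multiply. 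Then $\mathsf{mbs}^0(f^l) \to \infty$ along even (and, by monotonicity, all) $l$, and the non-monotone swap transfers this to $\mathsf{mbs}^1$ as well.

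\textbf{Expected main obstacle.} The routine part is the inequalities themselves; the only place requiring care is the bookkeeping of the $0/1$ superscripts. The lemmas give $\mathsf{mbs}^z(f \circ g)$ in terms of $\mathsf{mbs}^z(f)$ and $\mathsf{mbs}^0(g)$ (not $\mathsf{mbs}^z(g)$), so a naive iteration only controls the $\mathsf{mbs}^0$ sequence; I must invoke non-monotonicity of $f$ (Lemma~\ref{mbs and fmbs of composition}) at each stage to ferry bounds between the two output values and to guarantee the $\min$ term does not collapse to $1$. Once one fixes the order "non-monotone swap to refresh both superscripts, then multiply," everything goes through, but getting that order right — and checking that the base case $\mathsf{mbs}(f) \geq 2$ already makes $\min\{\mathsf{mbs}^0(f),\mathsf{mbs}^1(f)\} \geq 1$ so that the very first composition step can bootstrap it to $2$ — is where the proof could go wrong if done carelessly.
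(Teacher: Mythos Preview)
Your plan is correct and follows essentially the same route as the paper: monotonicity comes directly from Lemma~\ref{mbs and fmbs of composition}(1) via $\mathsf{mbs}^z(f^{l+1}) \geq \mathsf{mbs}(f^l) \geq \mathsf{mbs}^z(f^l)$, and divergence is obtained by first bootstrapping to $\mathsf{mbs}^z(f^2) \geq \mathsf{mbs}(f) \geq 2$ for both $z$ and then iterating the first term of Lemma~\ref{monotonicity under composition} with $f^2$ as the outer block to get geometric growth (the paper states this as $\mathsf{mbs}^z(f^l) \geq 2^{\lfloor l/2 \rfloor}$). Your detour through the second term of Lemma~\ref{monotonicity under composition} is unnecessary---once $\mathsf{mbs}^z(f^2) \geq 2$ for both $z$, the first term alone does all the work---but the plan as written is sound.
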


We are now in a position to prove \autoref{thm: ratio}. To recall, \autoref{thm: ratio} states that
for any function $f:\zone^n \to \zone$ the ratio $\frac{\mathsf{fmbs}(f^l)}{\mathsf{mbs}(f^l)}$ is independent of $l$.
    
    Remember that we are left with the case when $f$ is not monotone and $\mathsf{mbs}(f) \geq 2$. If we show that there exists a sequence $\{r_{l}\}_{l\geq 1}$  s.t. for all $l\geq 1$ we have:
    $$\frac{\mathsf{mbs}(f^{l})}{\mathsf{fmbs}(f^{l})}\geq r_{l}\geq 1/p(n),$$
    then we will be done.

    Now consider the sequence: $$r_{l}:=\min\{r^{0}_{l},\; r^{1}_{l}\},$$ where $r^{z}_{l}:=\frac{\mathsf{mbs}^{z}(f^{l})}{\mathsf{fmbs}^{z}(f^{l})}$ for $z\in\zone$.  Taking $z' \in \{0, 1\}$ as $\mathsf{fmbs}(f^l) = \mathsf{fmbs}^{z'}(f^l)$, we get:

    $$ \frac{\mathsf{mbs}(f^l)}{\mathsf{fmbs}(f^l)} = \frac{\mathsf{mbs}(f^l)}{\mathsf{fmbs}^{z'}(f^l)} \geq \frac{\mathsf{mbs}^{z'}(f^l)}{\mathsf{fmbs}^{z'}(f^l)}  = r_l^{z'} \geq r_l,$$
    i.e. $\frac{\mathsf{mbs}(f^{l})}{\mathsf{fmbs}(f^{l})}$ has $r_{l}$ as its lower bound. What remains to show is that for all $l\geq 1$:
    $$r_{l}\geq 1/p(n).$$
 
  Now, notice it is sufficient to show that for $l\geq l_{0}$, $r_{l}\geq 1/p(n),$ where $l_{0}$ is a parameter we fix later. 


     


To this effect, we show that for $l \geq l_{0}$:
\begin{equation} \label{eq:bound_on_r}
        r_{l+1}\geq r_{l}(1-2^{-1-\lfloor\frac{l-(l_{0}+1)}{2}\rfloor}).
\end{equation}
    
\autoref{eq:bound_on_r} will complete the proof because it implies that for all $s\geq l_{0}$:
    $$r_{s}\geq r_{l_{0}}\cdot \prod_{i=1}^{\infty}(1-2^{-i})^2 \underset{\autoref{ineq}}{\geq} r_{l_{0}}\cdot1/e^{4}\geq 1/e^{4}\cdot q(n),$$
    where $q(n)$ is any function of $n$ s.t. $r_{l_{0}}\geq q(n)$.

For the rest of the proof, our aim will be to show \autoref{eq:bound_on_r}. We do this by rounding a solution of the linear program for $\mathsf{fmbs}(f^{l+1})$ to a feasible solution for the integer program corresponding to $\mathsf{mbs}(f^{l+1})$. To accomplish this goal, we look at the composed function $f^{l+1}$ as $f\circ f^{l}$, which seems natural given the fact that we have a better understanding of the function $f$.

Let $x:=(x^{1},x^{2},...,x^{n})\in\{0,1\}^{n^{l+1}}$ be an input for which $\mathsf{fmbs}^{z}(f^{l+1})=\mathsf{fmbs}(f^{l+1},x)$ where $x^{1},x^{2},...,x^{n}\in\{0,1\}^{n^{l}}$ and let $y\in\zone^{n}$ be the $n$-bit string corresponding to $x$ i.e.$\; \forall\; i\in[n],\; y_{i}:=f^{l}(x^{i})$. As already mentioned, we will convert the optimal solution $x$ for $\mathsf{fmbs}(f^{l+1})$ to a feasible solution $\hat{x}$ (mentioned later) for $\mathsf{mbs}(f^{l+1})$. 

Let $\{B_{1},...,B_{k}\}$ be the set of all minimal sensitive blocks for $y$. What now needs to be observed is that any minimal monotone sensitive block for $f^{l+1}$ corresponds to a minimal sensitive block $B_{i}$ for $f$. The consequence of this observation is that every feasible solution of $\mathsf{fmbs}(f^{l+1})$ is a feasible solution of the following linear program:
$$\max \sum_{i=1}^{k}w_{i},$$
s.t.
$$\sum_{j:i\in B_{j}}w_{j}\leq \mathsf{fmbs}(f^{l},x^{i}),\; \forall\; i\in[n],$$
$$w_{j}\geq 0, \forall j\in[k].$$

We now convert an optimal solution of $\mathsf{fmbs}(f^{l+1})$ to a feasible solution of $\mathsf{mbs}(f^{l+1})$. Let $\{w^{*}_{j}\}_{j\in[k]}$ be an optimal assignment of weights for the above linear program and let $w^{'}_{j}:=\lfloor w_{j}^{*}\cdot r_{l}\rfloor$. Define $\hat{x}:=(\hat{x}^{1},\hat{x}^{2},...,\hat{x}^{n}),$ where $\mathsf{mbs}(f^{l},\hat{x}^{i}):=\mathsf{mbs}^{y_{i}}(f^{l})$. It can be observed that for all $j\in[k]$ we have:
$$\sum_{j:i\in B_{j}}w^{'}_{j}\leq \sum_{j:i\in B_{j}}w^{*}_{j}\cdot r_{l}\leq \mathsf{fmbs}(f^{l},x^{i})r_{l}\leq\mathsf{fmbs}(f^{l},x^{i})\cdot\frac{\mathsf{mbs}^{y_{i}}(f^{l})}{\mathsf{fmbs}^{y_{i}}(f^{l})} \leq \mathsf{mbs}(f^{l},\hat{x}^{i}),$$
where the last inequality follows from the fact that $f^{l}(x^{i})=f^{l}(\hat{x}^{i})=y_{i}$.

Now consider the following integer program:
$$\max \sum_{i=1}^{k}w_{i},$$
s.t.
$$\sum_{j:i\in B_{j}}w_{j}\leq \mathsf{mbs}(f^{l},\hat{x}^{i}),\; \forall\; i\in[n],$$
$$w_{j}\in \{0,1,2, \dots,\mathsf{mbs}(f^{l})\}, \forall j\in[k].$$

Clearly, $w^{'}$ forms a feasible solution for the above mentioned integer linear program.

We claim that using the assignment $w^{'}_{i}$ defined above we can construct $\sum_{i=1}^{k}w^{'}_{i}$ many disjoint monotone sensitive blocks for $\hat{x}$, which would imply $\sum_{i=1}^{k}w^{'}_{i}\leq \mathsf{mbs}(f^{l+1},\hat{x})$. 

We argue this as follows, consider the minimal monotone sensitive block $B_{1}$ for $y$ and to simplify the discussion assume that $B_{1}:=\{i_{1},\; i_{2}\}$. Now pick the $i_{1}^{th}$ copy of $f^{l}$.
Consider $w^{'}_{1}$ many disjoint monotone blocks for $x^{i_{1}}$ and denote them by $B^{{1}}_{i_{1},1},\; B^{{1}}_{i_{1},2},...\;,B^{1}_{i_{1},w^{'}_{1}}$. Similarly consider $w^{'}_{1}$ many disjoint monotone sensitive blocks for $x^{i_{2}}$. Observe that each of the monotone blocks $B^{1}_{i_{1},1}\cup B^{1}_{i_{2},{1}},\; B^{1}_{i_{1},{2}}\cup B^{1}_{i_{2},2},...,\; B^{1}_{i_{1},w^{'}_{1}}\cup B^{1}_{i_{2},w^{'}_{1}}$ are sensitive for the input $\hat{x}$ and are pairwise disjoint.

This implies:
$$\mathsf{fmbs}^{z}(f^{l+1})r_{l}\leq\sum_{i=1}^{k}w_{i}^{*}r_{l}\leq\sum_{i=1}^{k}(w^{'}_{i}+1)\leq \mathsf{mbs}(f^{l+1},\hat{x})+2^{n}\leq \mathsf{mbs}^{z}(f^{l+1})+2^{n},$$
where the last inequality follows from the fact that $f^{l+1}(\hat{x})=z$ and by the fact that monotone blocks for $y$ are subsets of $[n]$. 

Using the aforementioned inequality, we get:

\begin{align} \label{eq:r_l+1}
r^{z}_{l+1}&=\frac{\mathsf{mbs}^{z}(f^{l+1})}{\mathsf{fmbs}^{z}(f^{l+1})}\geq r_{l}-\frac{2^{n}}{\mathsf{fmbs}^{z}(f^{l+1})} \nonumber \\
&=r_{l}\bigg(1-\frac{2^{n}}{\mathsf{fmbs}^{z}(f^{l+1})}\cdot\frac{\mathsf{fmbs}^{z^{'}}(f^{l})}{\mathsf{mbs}^{z^{'}}(f^{l})}\bigg)\underset{\autoref{mbs and fmbs of composition}}{\geq}r_{l}\bigg(1-\frac{2^{n}}{\mathsf{mbs}^{z^{'}}(f^{l})}\bigg) 
\end{align}
where $z^{'}=\underset{z\in\zone}{\arg\min}\; r^{z}_{l}$.

We fix $l_{0}$ to be the minimum integer s.t. $\mathsf{mbs}(f^{l_{0}})\geq 2.2^{n}$. This gives us:
\begin{align*}
\mathsf{mbs}^{z^{'}}(f^{l})&\underset{\autoref{monotonicity under composition}}{\geq} \mathsf{mbs}^{z^{'}}(f^{l-(l_{0}+1)})\cdot \underset{b\in\zone}{\min}\; {\mathsf{mbs}^{b}(f^{l_{0}+1})}\\
&\underset{\autoref{lemma: fmbs monotone increasing}}{\geq }2^{\lfloor\frac{l-(l_{0}+1)}{2}\rfloor}\cdot(2\cdot 2^{n}).
\end{align*}

Putting the value of $\mathsf{mbs}^{z^{'}}(f^{l})$ in \autoref{eq:r_l+1} gives us 
\autoref{eq:bound_on_r}, completing the proof.
\end{proof}

\begin{proof}[Proof of \autoref{lift_intro}]

    We will derive the lifting for $\mathsf{fmbs}$ by using a lifting for $\mathsf{fmbs}^{0}$. Formally, if $\mathsf{mbs}(f)=O(M(f)^{\alpha})$, where the complexity measure $M(.)$ composes, then $\mathsf{fmbs}^{0}(f)=M(f)^{\alpha}$.
    
    Let $\mathsf{fmbs}^{0}(f)>M(f)^{\alpha}$ i.e. $\mathsf{fmbs}^{0}(f)=M(f)^{\alpha}+\epsilon$ for some $\epsilon>0$. 
    This implies,
    \begin{align*}
        \mathsf{fmbs}^{0}(f)^{l}=&(M(f)^{\alpha})^{l}(1+\epsilon^{'})^{l}, \intertext{where $\epsilon^{'}:=\epsilon/M(f)^{\alpha}$,}\\
         \mathsf{fmbs}^{0}(f^{l})\geq \mathsf{fmbs}^{0}(f)^{l}&\geq (M(f)^{\alpha})^{l}(1+\epsilon^{'})^{l}.\intertext{ Using \autoref{fmbs composes} and \autoref{thm: ratio}, we get:}\\
         p(n)M(f^{l})^\alpha\geq p(n)\mathsf{mbs}(f^{l})&\geq \mathsf{fmbs}^{0}(f^{l})\geq M(f)^{l\alpha}(1+\epsilon^{'})^{l}.\\\intertext{This implies,}
         p(n)&\geq(1+\epsilon^{'})^{l} 
    \end{align*}
    Which is a contradiction for a fixed $n$ and a sufficiently large $l$.

    Now using the above lifting for $\mathsf{fmbs}^{0}(f)$ we derive the lifting for $\mathsf{fmbs}(f)$. 
    
    Let $\mathsf{fmbs}(f)>M(f)^{2\alpha}$ i.e. $\mathsf{fmbs}(f)=M(f)^{2\alpha}+\epsilon$ for some $\epsilon>0$. 
    This implies,
    \begin{align*}
        \mathsf{fmbs}(f)^{l}=&(M(f)^{2\alpha})^{l}(1+\epsilon^{'})^{l}, \intertext{where $\epsilon^{'}:=\epsilon/M(f)^{2\alpha}$. Now by \autoref{mbs and fmbs of composition} it follows that:}\\
         \mathsf{fmbs}^{0}(f^{2})^{l}\geq \mathsf{fmbs}(f)^{l}&\geq (M(f)^{2\alpha})^{l}(1+\epsilon^{'})^{l}.\intertext{ Using \autoref{fmbs composes} and ~\autoref{thm: ratio}, we get:}\\
         p(n)M(f)^{2l\alpha}\geq p(n)M(f^{2l})^\alpha\geq p(n)\mathsf{mbs}(f^{2l})&\geq \mathsf{fmbs}^{0}(f^{2l})\geq \mathsf{fmbs}(f)^{l}\geq M(f)^{2l\alpha}(1+\epsilon^{'})^{l}.\\\intertext{This implies,}
         p(n)&\geq(1+\epsilon^{'})^{l} 
    \end{align*}
    Which is a contradiction for a fixed $n$ and a sufficiently large $l$.

    Now, if the complexity measure $M$ satisfies the condition $M(1-f)=O(M(f))$ then using the fact that $\mathsf{fmbs}^{1}(f)=\mathsf{fmbs}^{0}(1-f)$, we have:
    $$\mathsf{fmbs}(f)=\max\{\mathsf{fmbs}^{0}(f),\; \mathsf{fmbs}^{1}(f)\}=O(M(f)^{\alpha}). $$
\end{proof}
\section{Results needed for the proof of ~\autoref{thm: ratio}}
\label{appendix:results for theorem1}
We basically give an analog of the identities that hold for standard complexity measures for monotone complexity measures presented in \cite{Tal13}. We start by showing how $\hsc(f)$ and $\mathsf{mbs}(f)$ are related.
\begin{theorem}\label{HSC vs mbs and s}
    For a Boolean function $f:\{0,1\}^{n}\rightarrow\{0,1\}$ we have:
    
    $$\hsc^{z}(f)\leq \mathsf{mbs}^{z}(f)\ms^{1-z}(\tilde{f}),$$
    where $\tilde{f}(x):=f(1-x_{1},1-x_{2},...,1-x_{n})$.
\end{theorem}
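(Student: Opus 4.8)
The statement is the monotone (restricted-world) analogue of the classical inequality $\cert_b(f) \le \s_{1-b}(f)\cdot\bs_b(f)$, and I would prove it by transcribing that argument into the setting of $f_x$ and reading the ``downward'' direction through the complementation $\tilde f$. Throughout, for $u\in\zone^n$ write $\bar u := (1-u_1,\dots,1-u_n)$, so $\tilde f(u)=f(\bar u)$.

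\textbf{Setup and the certificate construction.} Fix an input $x\in f^{-1}(z)$ with $\hsc^{z}(f)=\hsc(f,x)$. The free coordinates of $f_x$ are exactly $[n]\setminus\supp(x)$, and $\hsc(f,x)$ is the least size of a set $C\subseteq[n]\setminus\supp(x)$ that is a certificate for $f_x$ at $0^{n-|x|}$. Since points of $f_x$'s domain correspond to sets $S\subseteq[n]\setminus\supp(x)$ via $S\mapsto x^{S}$, one checks that $C$ is such a certificate if and only if $C$ meets every monotone sensitive block of $f$ at $x$, i.e.\ every $B\subseteq[n]\setminus\supp(x)$ with $f(x^{B})\ne f(x)$; and for this it suffices that $C$ meet every \emph{minimal} such block. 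Now pick a maximal family $B_1,\dots,B_k$ of pairwise disjoint minimal monotone sensitive blocks of $f$ at $x$, so that $k=\mathsf{mbs}(f,x)\le\mathsf{mbs}^{z}(f)$. I claim $C:=\bigcup_{i=1}^{k}B_i$ is a certificate for $f_x$ at $0^{n-|x|}$: otherwise some monotone sensitive block is disjoint from $C$, and shrinking it to a minimal monotone sensitive block (still disjoint from $C$) contradicts maximality of the family. Hence $\hsc^{z}(f)=\hsc(f,x)\le|C|\le\sum_{i=1}^{k}|B_i|$.

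\textbf{Bounding each block.} It remains to show $|B_i|\le\ms^{1-z}(\tilde f)$ for every $i$. Let $w:=x^{B_i}$; since $B_i$ is a sensitive block, $f(w)=1-z$. Because $B_i$ is \emph{minimal}, for each $j\in B_i$ the input $x^{B_i\setminus\{j\}}$ evaluates to $z$, i.e.\ flipping coordinate $j$ of $w$ from $1$ to $0$ changes the value of $f$; thus every coordinate of $B_i$ is ``downward sensitive'' for $f$ at $w$. Now unwind $\ms(\tilde f,\bar w)=\s\big((\tilde f)_{\bar w},0^{n-|\bar w|}\big)$: using $\tilde f(v)=f(\bar v)$ and $\overline{\bar w^{\oplus j}}=w^{\oplus j}$, this quantity equals the number of coordinates $j\in\supp(w)$ that are downward sensitive for $f$ at $w$, which is at least $|B_i|$ (as $B_i\subseteq\supp(w)$ consists of such coordinates). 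Moreover $\tilde f(\bar w)=f(w)=1-z$, so $\bar w\in\tilde f^{-1}(1-z)$ and therefore $|B_i|\le\ms(\tilde f,\bar w)\le\ms^{1-z}(\tilde f)$. Combining, $\hsc^{z}(f)\le\sum_{i=1}^{k}|B_i|\le k\cdot\ms^{1-z}(\tilde f)\le\mathsf{mbs}^{z}(f)\,\ms^{1-z}(\tilde f)$.

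\textbf{Expected main obstacle.} The argument itself is a routine transcription of the standard $\cert\le\s\cdot\bs$ proof; the only delicate point is the complementation bookkeeping — keeping straight which output value ($z$ versus $1-z$) attaches to which point, and verifying carefully that ``monotone sensitivity of $\tilde f$ at the complemented point'' really is the right encoding of ``downward sensitivity of $f$,'' so that the quantity $\ms^{1-z}(\tilde f)$ appearing on the right-hand side is exactly what the block-size bound produces.
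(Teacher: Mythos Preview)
Your proof is correct and follows essentially the same approach as the paper's: take a maximum collection of disjoint minimal monotone sensitive blocks at the maximizing input, argue their union is a certificate, and bound each block's size via minimality. Your treatment is in fact more careful than the paper's about the $\tilde f$ bookkeeping (the paper just asserts $|B_i|\le\ms(\tilde f)$ without unwinding the complementation), and the only quibble is that ``maximal'' should be ``maximum'' to justify $k=\mathsf{mbs}(f,x)$, though $k\le\mathsf{mbs}(f,x)$ is all you actually use.
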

\begin{proof}
Let $\hsc(f)=\hsc(f,x)$. We now consider the function $f_{x}$ at the input $0^{n-|x|}$ along with a set of disjoint minimal sensitive blocks $\{B_{1},B_{2},...,B_{k}\}$ with $k=\mathsf{mbs}(f,x)$. We claim that for every $i\in[k]$, $f_{x}$ is sensitive on $0^{\oplus{B_{i}}}$ at each index $j\in B_{i}$. If this was not the case then there would exist $B\subsetneq B_{i}$ s.t. $B$ is a sensitive block for $f_{x}$ at $0^{n-|x|}$, contradicting the claim that $B_{i}$ is minimal. 

Now, we claim that the set $\cup_{i\in[k]}{B_{i}}$ is a certificate for $f_{x}$ at $0^{n-|x|}$. If this was not the case then we would have obtained a sensitive block $B$ at $0^{n-|x|}$ s.t. $B\cap B_{i}=\emptyset$ for all $i\in[k]$. This would have contradicted the assumption that $\{B_{i}:i\in [k]\}$ is a witness for $\mathsf{mbs}(f,x)$. 
Hence from the above discussion we obtain:

$$\hsc(f)=\hsc(f,x)\leq \sum_{i\in[k]}|B_{i}|\leq \mathsf{mbs}(f,x)\ms(\tilde{f})\leq \mathsf{mbs}(f)\ms(\tilde{f}).$$
\end{proof}

In the following theorem we show that the composition result for Boolean functions $f,g$ (see \cite{Tal13}), can also be extended to $\mathsf{mbs}$.

\begin{theorem}
\label{theo: mbs of composition}
Let $f:\zone^n \to \zone$ and $g: \zone^m \to \zone$ be Boolean functions then $\mathsf{mbs}(f \circ g) \leq \fbs(f). \mathsf{mbs}(g)$.
\end{theorem}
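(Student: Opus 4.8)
The plan is to adapt to the monotone setting Tal's argument for the bound $\bs(f\circ g)\le\fbs(f)\cdot\bs(g)$. Fix an input $x=(x^1,\dots,x^n)\in\zone^{nm}$ with $x^i\in\zone^m$ for which $\mathsf{mbs}(f\circ g)=\mathsf{mbs}(f\circ g,x)=:k$, and let $B_1,\dots,B_k$ be pairwise disjoint monotone sensitive blocks for $x$ (so each $B_j$ consists only of coordinates $t$ with $x_t=0$ and $(f\circ g)(x^{B_j})\ne(f\circ g)(x)$). Put $y\in\zone^n$ with $y_i:=g(x^i)$, so that $(f\circ g)(x)=f(y)$, and for each $j$ write $B_j^i$ for the restriction of $B_j$ to the block of coordinates feeding the $i$-th copy of $g$; since $B_j$ is monotone for $x$, each $B_j^i$ is a monotone block of $x^i$.

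First I would pass from the blocks $B_j$ of $f\circ g$ to sensitive blocks of $f$. Let $y^{(j)}\in\zone^n$ be defined by $y^{(j)}_i:=g\big((x^i)^{B_j^i}\big)$, and let $S_j:=\{\,i\in[n]:y^{(j)}_i\ne y_i\,\}$, i.e.\ the set of copies in which $B_j^i$ is actually a sensitive monotone block of $g$ at $x^i$. By construction $y^{(j)}=y^{S_j}$ and $f(y^{(j)})=(f\circ g)(x^{B_j})\ne(f\circ g)(x)=f(y)$, so $S_j$ is a nonempty sensitive block of $f$ at $y$. The key counting observation is that for each fixed copy $i$, the sets $\{B_j^i:j\in[k],\,i\in S_j\}$ are pairwise disjoint nonempty monotone sensitive blocks of $g_{x^i}$ at the all-zeros input (pairwise disjointness is inherited from that of $B_1,\dots,B_k$), whence $|\{\,j:i\in S_j\,\}|\le\mathsf{mbs}(g,x^i)\le\mathsf{mbs}(g)$.

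Next I would exhibit a feasible solution to the linear program defining $\fbs(f,y)$. Group the $S_j$ by value, and to each distinct block $S$ that occurs assign weight $b_S:=|\{\,j:S_j=S\,\}|/\mathsf{mbs}(g)$. For every coordinate $i\in[n]$, $\sum_{S\ni i}b_S=|\{\,j:i\in S_j\,\}|/\mathsf{mbs}(g)\le 1$ by the counting observation, so the covering constraints hold; moreover each $b_S\le 1$, because if $S$ occurs with multiplicity $t$ then any $i\in S$ is covered by at least these $t$ blocks, forcing $t\le\mathsf{mbs}(g)$. The objective value of this solution is $\sum_S b_S=k/\mathsf{mbs}(g)$, so $\fbs(f)\ge\fbs(f,y)\ge k/\mathsf{mbs}(g)$, i.e.\ $\mathsf{mbs}(f\circ g)=k\le\fbs(f)\cdot\mathsf{mbs}(g)$, as desired. (If $\mathsf{mbs}(g)=0$ then every $\mathsf{mbs}(g,x^i)=0$, hence every $S_j$ is empty, which is impossible unless $k=0$; so the inequality is trivial in that case.)

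I expect the only genuinely delicate point to be the bookkeeping for LP feasibility — specifically verifying the constraint $b_S\le 1$ for repeated blocks, which reduces precisely to the disjointness of the $B_j$ restricted to a single copy of $g$. Everything else is a direct translation of the standard ($\bs$ versus $\fbs$) composition argument, using monotone sensitive blocks in place of arbitrary ones.
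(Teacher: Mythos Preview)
Your argument is correct and follows essentially the same route as the paper: both proofs project disjoint monotone sensitive blocks of $f\circ g$ at $x$ to sensitive blocks of $f$ at $y$, observe that each copy $i$ can be hit at most $\mathsf{mbs}(g,x^i)$ times, and then divide by $\mathsf{mbs}(g)$ to obtain a feasible fractional solution for $\fbs(f,y)$. Your write-up is in fact a bit cleaner than the paper's, since you work directly with the (not necessarily minimal) blocks $S_j$ rather than routing through minimal sensitive blocks of $f$.
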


\begin{proof}
    Let $f:\{0,1\}^{n}\rightarrow\{0,1\}$ and $g:\{0,1\}^{m}\rightarrow\{0,1\}$ be two Boolean functions. Now consider an input $(x^{1},x^{2},...,x^{n})\in(\{0,1\}^{m})^{n}$ where $x^{i}:=(x^{i}_{1},...,x^{i}_{m})$ for $i\in[n]$.

    Now we wish to calculate $\mathsf{mbs}(f\circ g,x)$. Any minimal monotone block for $f\circ g $ at $x$ is a union of minimal monotone blocks for different $g^{i}$ where $i\in\mathcal{I}\subseteq[n]$ where $\mathcal{I}$ is a minimal sensitive block for $f$ at $y:=(g(x^{1}),..,g(x^{n}))$.

    If we assume that $\mathcal{B}_{1},..,\mathcal{B}_{k}$ is the set of all minimal sensitive blocks for $f$ at $y$ and by $m_{j}$ let us denote the number of minimal monotone sensitive blocks for $f\circ g$ that intersects with $\mathcal{B}_j$. As we want to obtain a collection of disjoint monotone sensitive blocks for $f\circ g$ at $x$ hence we have that $\forall i\in[n],\; \sum_{j:i\in \mathcal{B}_{j}}m_{j}\leq \mathsf{mbs}(g,x^{i}).$  
    Hence $\mathsf{mbs}(f\circ g,x)$ is equal to the optimal value of the following integer program:

    $$\max\sum_{i=1}^{k}m_{i},$$
    s.t.
    $$\sum_{j:i\in \mathcal{B}_{j}}m_{j}\leq \mathsf{mbs}(g,x^{i}),\; \forall i\in[n],$$
    $$m_{i}\in\{0,1,..,\mathsf{mbs}(g)\}$$

Now, relaxing the above integer program to linear program we get a feasible solution for the linear program corresponding to $\fbs(f,y)$ by simply dividing it by $\mathsf{mbs}(g)$.:
    This implies,
    $$\fbs(f,y)\geq \frac{OPT(LP(\mathsf{mbs}(f\circ g,x)))}{\mathsf{mbs}(g)}\geq \frac{OPT(IP(\mathsf{mbs}(f\circ g,x)))}{\mathsf{mbs}(g)}=\frac{\mathsf{mbs}(f\circ g,x)}{\mathsf{mbs}(g)} $$

   Taking $x$ to be the input s.t. $\mathsf{mbs}(f\circ g)=\mathsf{mbs}(f\circ g,x)$ then we get our desired result.
\end{proof}

The next lemma shows that how 

\begin{proof}[Proof of \autoref{mbs and fmbs of composition}]
1. Let $\mathsf{mbs}(g)=\mathsf{mbs}(g,x)$ and assume that $g(x)=0$.
Now as we know that $f$ is non-monotone hence there exists inputs $x^{1},x^{2},x^{3},x^{4}$ s.t. $x^{1}<x^{2}$ and $x^{3}<x^{4}$ and $f(x^{1})\neq f(x^{2})$ and $f(x^{3})\neq f(x^{4})$. In fact we can consider the stronger assumption $|x^{1}-x^{2}|=|x^{3}-x^{4}|=1$ i.e. have hamming distance 1.

For ease of discussion let us assume $f(x^1)=f(x^{4})=0$ and $f(x^{2})=f(x^{3})=1$.

Now consider inputs $y^i\equiv(y^i_{1},...,y^{i}_{n})$, $i\in\{1,2,3,4\}$, for $f\circ g$ which are defined as follows:
$$y^i_{j}:=\begin{cases}
    x & \text{, if $x^{i}_{j}=0$}\\
 \alpha & \text{, o.w.}
\end{cases},$$
where $\alpha$ is any string in $g^{-1}(1)$.

What we claim is that $\mathsf{mbs}(f\circ g ,y^1)\geq \mathsf{mbs}(g)$. This is because we can convert the string $x^{1}$ to $x^{2}$ by flipping the corresponding bits in $y^{1}$.  As $f\circ g(y^1)=f(x^1)=0$ hence we have $\mathsf{mbs}^{0}(f\circ g)\geq \mathsf{mbs}(f\circ g,y^1)\geq \mathsf{mbs}(g).$ Similarly, we can convert the string $x^{3}$ to $x^{4}$ by flipping the corresponding bits in $y^{3}$ to obtain  $\mathsf{mbs}^1(f\circ g)\geq \mathsf{mbs}(f\circ g ,y^3)\geq \mathsf{mbs}(g)$.

If it was the case that $g(x)=1$ then the definition of $y^{i}$ for $i\in\{1,\; 2,\; 3,\; 4\}$ would have been as follows:
$$y^i_{j}:=\begin{cases}
    x & \text{, if $x^{i}_{j}=1$}\\
 \alpha & \text{, o.w.}
\end{cases},$$
where $\alpha$ is any string in $g^{-1}(0)$. Using a similar argument as done for the case when $g(x)=0$ we would have obtained the same inequality.

2. Let $\mathsf{fmbs}(f)=\mathsf{fmbs}(f,x)$ and let $g(x)=0$. Let $ x^{1},x^{2},x^{3},x^{4}\in\zone^{n}$ and $y^{1},y^{2},y^{3},y^{4}\in(\zone^n)^{m}$ be the strings defined in part 1. 

Now consider the linear program for $\mathsf{fmbs}(f\circ g,y^{1})$ i.e.:
$$\mathsf{fmbs}(f\circ g,y^{1}):=\sum_{w\in\mathcal{W}(f\circ g)}b_{w},$$
s.t.
$$\forall (i,j)\in [n]\times [m],\sum_{w\in\mathcal{W}(f\circ g,y^{1}):(i,j)\in w}b_{w}$$
and,
$$\forall w\in\mathcal{W}(f\circ g ,y^{1}),\; b_{w}\in[0,1].$$

Now, as $|x^{1}-x^{2}|=1$ hence let us consider the copy of $g$ in $f\circ g$, call it $j\in[n]$, which corresponds to the bit where $x^{1}$ and $x^{2}$ differ. As $f(x^{1})\neq f(x^{2})$ and $|x^{2}-x^{1}|=1$ hence all the monotone blocks for the $j-th$ copy of $g$ are monotone blocks for $f\circ g$. Now, let $\{\hat{b}_{w^{'}}:w^{'}\in\mathcal{W}(g,x)\}$ be a feasible solution corresponding to the linear program for $\mathsf{fmbs}(g,x)$.

Consider the following assignment to weights $b_{w}$:
$$b_{w}:=
\begin{cases}
    &\hat{b}_{w},\text{ if w$\in \mathcal{W}(g,x)$}\\
    & 0, \text{ otherwise}
\end{cases}
$$

It is easy to verify that that the above assignment for $b_{w}$ forms a feasible solution for the LP corresponding to $\mathsf{fmbs}(f\circ g,y^{1})$. Hence $$\mathsf{fmbs}^{0}(f\circ g)\geq \mathsf{fmbs}(f\circ g,y^{1})\geq \mathsf{fmbs}(g).$$

Similarly, we can say that:
$$\mathsf{fmbs}^{1}(f\circ g)\geq \mathsf{fmbs}(f\circ g,y^{3})\geq \mathsf{fmbs}(g).$$
\end{proof}

\begin{proof}[Proof of \autoref{monotonicity under composition}]
Let $p^0,p^1$ be the inputs for which $\mathsf{mbs}^z(g)=\mathsf{mbs}(g,p^z),$ for $z\in\zone$. Let $\mathsf{mbs}^z(f)=\mathsf{mbs}(f,y)$.
Now consider the input $x\equiv(x^{1},x^2,...,x^n)$ defined as:
$$x^i:=p^{y_{i}}.$$
The $\mathsf{mbs}^{z}(f\circ g,x)$ is clearly $\geq \mathsf{mbs}^{z}(f)\mathsf{mbs}^0(g)$. This is because for every disjoint monotone sensitive block corresponding to $\mathsf{mbs}^z(f)$ we have $\mathsf{mbs}^0(g)$ many disjoint sensitive blocks.

Similarly, for every disjoint sensitive block corresponding to $\bs^z(f)$, we have $\min\{\mathsf{mbs}^0(g),\mathsf{mbs}^1(g)\}$ many monotone sensitive blocks. This gives $\mathsf{mbs}^z(f\circ g)\geq \bs^z(f) \min\{\mathsf{mbs}^0(g),\mathsf{mbs}^1(g)\}$.
\end{proof}

\begin{proof}[Proof of \autoref{lemma: fmbs monotone increasing}]
    The monotone increasing part of the lemma is obtained using part 1 of \autoref{mbs and fmbs of composition} as follows:
    \begin{align*}
        \mathsf{mbs}^{z}(f^l)\geq \mathsf{mbs}(f^{l-1})\geq \mathsf{mbs}^{z}(f^{l-1}).
    \end{align*}

Now we show that the sequence diverges if $\mathsf{mbs}(f)\geq 2$. In particular, we show that for all $l\geq 2$ and for all $z\in\{0,1\}$, we have:
    $$\mathsf{mbs}^{z}(f^l)\geq 2^{\lfloor l/2\rfloor}.$$

    We prove it via induction on $l$. The base case is for $l=2$. Using \autoref{monotonicity under composition} and the assumption that $\mathsf{mbs}(f)\geq 2$ we obtain:
    $$\mathsf{mbs}^{z}(f^2)\geq \mathsf{mbs}(f)\geq 2$$
    Now for the inductive step consider $l=k$. 
    Using part 1 of \autoref{mbs and fmbs of composition} and  we get, 
    $$\mathsf{mbs}^{z}(f^k)\geq \mathsf{mbs}^{z}((f^{2})^{\lfloor k/2\rfloor} )\geq 
    \mathsf{mbs}^{z}(f^2)\mathsf{mbs}^{0}(f^{{2\lfloor k/2\rfloor-2}}).$$

    Hence by using the induction hypothesis we get that:
    $$\mathsf{mbs}^{z}(f^k)\geq \mathsf{mbs}^{z}(f^2)\mathsf{mbs}^{0}(f^{{2\lfloor k/2\rfloor-2}})
    \geq 2.2^{\lfloor k/2\rfloor-1}=2^{\lfloor k/2\rfloor} $$
\end{proof}

A natural question to ask at this point is do we have a relation for the ratio between $\hsc(f^{l})$ and $\mathsf{mbs}(f^{l})$ similar to $\mathsf{fmbs}(f^{l})$ and $\mathsf{mbs}(f^{l})$? That might be an interesting problem to look at but what we do have is the following simple corollary which follows from the fact that $\hsc(f)=O(\mathsf{fmbs}(f)\log(\spar(f)))$.
\begin{corollary}\label{hsc vs mbslog}
    For a Boolean function $f:\zone^{n}\rightarrow\zone$ we have that for all $l\geq 1$:
    $$\frac{\hsc(f^{l})}{\mathsf{mbs}(f^{l})\log(\spar(f^{l}))}\leq C(n),$$
    where $C(n)$ is a function independent of $l$.
\end{corollary}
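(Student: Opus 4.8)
The plan is to simply chain together two results that are already in hand: the KLMY inequality $\hsc(g) = O(\mathsf{fmbs}(g)\log \spar(g))$, applied to the composed function $g = f^{l}$, together with \autoref{thm: ratio}, which furnishes a bound $\mathsf{fmbs}(f^{l}) \le p(n)\,\mathsf{mbs}(f^{l})$ with $p(n)$ independent of $l$. All the real content sits in \autoref{thm: ratio}; the corollary is a one-line consequence once that is available.

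Concretely, I would first invoke $\hsc(f^{l}) = O\big(\mathsf{fmbs}(f^{l})\log \spar(f^{l})\big)$, observing that the hidden constant is absolute: it arises from the LP-covering argument of Knop et al.\ applied verbatim to $f^{l}$, so it depends neither on $l$ nor on $n$. Next I would substitute $\mathsf{fmbs}(f^{l}) \le p(n)\,\mathsf{mbs}(f^{l})$ from \autoref{thm: ratio}, obtaining $\hsc(f^{l}) = O\big(p(n)\,\mathsf{mbs}(f^{l})\log \spar(f^{l})\big)$. Dividing through by $\mathsf{mbs}(f^{l})\log \spar(f^{l})$ then gives $\frac{\hsc(f^{l})}{\mathsf{mbs}(f^{l})\log \spar(f^{l})} \le C(n)$ with $C(n) = O(p(n))$, which is a function of $n$ alone, as required.

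The only points needing a word of care are degenerate cases and the regime of validity, none of which is a genuine obstacle. If $f$ (hence $f^{l}$) is constant, or more generally if $\spar(f^{l}) \le 1$ so that $\log \spar(f^{l}) = 0$, then $\mathsf{mbs}(f^{l})\log \spar(f^{l}) = 0$ as well and the stated ratio is interpreted as vacuous (equivalently one may replace $\log \spar$ by $\max\{1,\log \spar\}$ throughout and absorb the change into $C(n)$). Also, since \autoref{thm: ratio} is stated for sufficiently large $n$, the corollary inherits exactly the same caveat. Beyond this bookkeeping there is nothing to prove: the inequality is immediate from the two cited facts.
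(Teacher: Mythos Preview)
Your proposal is correct and matches the paper's own proof essentially verbatim: the paper also applies $\hsc(g)=O(\mathsf{fmbs}(g)\log\spar(g))$ to $g=f^{l}$ and then invokes \autoref{thm: ratio} to replace $\mathsf{fmbs}(f^{l})$ by $p(n)\,\mathsf{mbs}(f^{l})$. The only difference is that the paper omits your remarks on degenerate cases and on the ``sufficiently large $n$'' caveat.
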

\begin{proof}
    Using the fact that $\hsc(f)=O(\mathsf{fmbs}(f)\log(\spar(f)))$ and by \autoref{thm: ratio} we get:
    $$\frac{\hsc(f^{l})}{\mathsf{fmbs}(f^{l})\log(\spar(f^{l}))}\cdot\frac{\mathsf{fmbs}(f^{l})}{\mathsf{mbs}(f^{l})}\leq \frac{\hsc(f^{l})}{\mathsf{mbs}(f^{l})\log(\spar(f^{l}))}=O(p(n)).$$
\end{proof}

Another interesting observation from \autoref{thm: ratio} is 
the following corollary.

\begin{corollary}
    For any Boolean function $f$ and constant $c> 0$, there exists a $l_{0}\in\mathbb{N}$ such that for all $l\geq l_{0}$, $$\mathsf{fmbs}(f^{l})\leq \mathsf{mbs}^{1+c}(f^{l}).$$
\end{corollary}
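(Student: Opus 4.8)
The plan is to deduce the statement directly from \autoref{thm: ratio}. That theorem yields a function $p(n)$, independent of $l$, with $\mathsf{fmbs}(f^l)\le p(n)\,\mathsf{mbs}(f^l)$ for every $l\ge 1$. Hence, as soon as $\mathsf{mbs}(f^l)\ge p(n)^{1/c}$ we obtain
$$\mathsf{fmbs}(f^l)\ \le\ p(n)\,\mathsf{mbs}(f^l)\ \le\ \mathsf{mbs}(f^l)^{c}\cdot\mathsf{mbs}(f^l)\ =\ \mathsf{mbs}(f^l)^{1+c}.$$
So the whole proof reduces to ensuring that $\mathsf{mbs}(f^l)$ eventually exceeds the fixed quantity $p(n)^{1/c}$, together with a separate treatment of the functions for which it does not.

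First I would record that $\{\mathsf{mbs}(f^l)\}_{l\ge 1}$ is non-decreasing: for non-monotone $f$ this is the monotonicity part of \autoref{lemma: fmbs monotone increasing}, and for monotone $f$ it follows from \autoref{monotonicity under composition} using $\mathsf{mbs}^0(f)\ge 1$ for non-constant monotone $f$ (and $\mathsf{mbs}\equiv 0$ if $f$ is constant). So the sequence either tends to infinity, in which case I simply pick $l_0$ with $\mathsf{mbs}(f^{l_0})\ge p(n)^{1/c}$ and the displayed inequality settles all $l\ge l_0$, or it is eventually constant.

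For the eventually-constant case I would prove the stronger equality $\mathsf{fmbs}(f^l)=\mathsf{mbs}(f^l)$, which makes any $l_0$ admissible. If $f$ is monotone (or constant) this is \autoref{thm:symm_monotone} applied to the monotone function $f^l$ (using $\mathsf{mbs}(f^l)\ge 1$ when non-constant, and both sides equal to $0$ otherwise). If $f$ is non-monotone then $\mathsf{mbs}(f)\ge 1$, and if $\mathsf{mbs}(f^k)\ge 2$ for some $k$ (in particular if $\mathsf{mbs}(f)\ge 2$), then part~1 of \autoref{mbs and fmbs of composition} gives $\mathsf{mbs}^z(f^{k+1})\ge\mathsf{mbs}(f^k)\ge 2$ for each $z\in\{0,1\}$, and iterating \autoref{monotonicity under composition} along $f^{j(k+1)}=f^{(j-1)(k+1)}\circ f^{k+1}$ gives $\mathsf{mbs}(f^{j(k+1)})\ge 2^{j}$, contradicting boundedness. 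Hence $\mathsf{mbs}(f^l)=1$ for all $l$; then part~1 of \autoref{mbs and fmbs of composition} forces $\mathsf{mbs}^0(f^l)=\mathsf{mbs}^1(f^l)=1$ for $l\ge 2$, and \autoref{monotonicity under composition} applied to $f^{2l}=f^l\circ f^l$ yields $\mathsf{mbs}(f^{2l})\ge\bs(f^l)$, so $\bs(f^2)\le 1$. A non-constant Boolean function of block sensitivity at most $1$ computes a single variable or its negation; since $f\circ f$ is then such a function it depends on exactly one variable, hence so does $f$, and being non-constant and non-monotone $f=\neg x_i$ for some $i$ — for which one checks directly that $\mathsf{fmbs}(f^l)=\mathsf{mbs}(f^l)=1$ for all $l$.

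The step I expect to be the real obstacle is this eventually-constant (degenerate) case. The constant $p(n)$ coming from \autoref{thm: ratio} is not absorbed by $\mathsf{mbs}(f^l)^c$ when $\mathsf{mbs}(f^l)$ stays small, so one genuinely has to exclude the scenario where $\mathsf{fmbs}$ is boundedly but strictly larger than $\mathsf{mbs}$ along the whole sequence; the slightly fiddly ingredients there are the chain of composition lemmas forcing $\bs(f^2)\le 1$ and the standard fact that block sensitivity at most $1$, for a non-constant function, characterises dictators and their negations.
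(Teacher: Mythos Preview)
Your proposal is correct and follows the same core idea as the paper: invoke \autoref{thm: ratio} to get $\mathsf{fmbs}(f^l)\le p(n)\,\mathsf{mbs}(f^l)$, and then use that $\mathsf{mbs}(f^l)\to\infty$ so that eventually $p(n)\le\mathsf{mbs}(f^l)^c$. The paper packages this as a four-line proof by contradiction, simply citing that ``the sequence $\{\mathsf{mbs}(f^{l})\}$ diverges'' (i.e.\ \autoref{lemma: fmbs monotone increasing}).

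Where you go beyond the paper is in the treatment of the degenerate cases where $\{\mathsf{mbs}(f^l)\}$ does \emph{not} diverge. The paper's proof of the corollary tacitly assumes divergence and does not separately address monotone $f$ or non-monotone $f$ with $\mathsf{mbs}(f^l)\equiv 1$; those cases are instead absorbed into the case analysis at the start of the proof of \autoref{thm: ratio}. Your reduction of the non-monotone bounded case to $\bs(f^2)\le 1$ via \autoref{mbs and fmbs of composition} and \autoref{monotonicity under composition}, and then to $f=\neg x_i$, is a clean self-contained argument (and incidentally matches the paper's characterisation in Section~\ref{char of mbs=1} of functions with $\mathsf{mbs}=1$). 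So your route is the same as the paper's in the main case, and more explicit in the boundary cases.
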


\begin{proof}
Let us assume that no such $l_{0}$ exists i.e. there exists an infinite sequence of integers, say $\{m_{i}\}_{i\geq 1}$ s.t.:
$$\mathsf{fmbs}(f^{m_{i}})>\mathsf{mbs}(f^{m_{i}})\mathsf{mbs}^{c}(f^{m_{i}}),$$
for all $i\geq 1$.

In other words, this implies that:
$$p(n)\geq \frac{\mathsf{fmbs}(f^{m_{i}})}{\mathsf{mbs}(f^{m_{i}})}>\mathsf{mbs}^{c}(f^{m_{i}}),$$
for all $i\geq 1$.

This is a contradiction to the fact that the sequence $\{\mathsf{mbs}(f^{l})\}$ diverges.

\end{proof}

Finally, we also mention the following inequality which has been used in the proof of \autoref{thm: ratio}:

\begin{proposition}\label{ineq}
    $\prod_{i=1}^{\infty}(1-2^{-i})\geq 1/e^{2}$
\end{proposition}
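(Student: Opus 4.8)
The plan is to prove the elementary inequality $\prod_{i=1}^{\infty}(1-2^{-i})\geq 1/e^{2}$ by bounding the logarithm of the partial products. First I would take logarithms and aim to show $-\sum_{i=1}^{\infty}\ln\bigl(\tfrac{1}{1-2^{-i}}\bigr) = \sum_{i=1}^{\infty}\ln(1-2^{-i}) \geq -2$, i.e. $\sum_{i=1}^{\infty}\ln\bigl(\tfrac{1}{1-2^{-i}}\bigr)\leq 2$. The natural tool is the bound $-\ln(1-x)\leq \dfrac{x}{1-x}$ (or, slightly more crudely, $-\ln(1-x)\leq 2x$ for $x\in[0,1/2]$, which is what I would actually use since every term $2^{-i}$ lies in $[0,1/2]$). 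Applying the latter termwise gives
\begin{equation*}
\sum_{i=1}^{\infty}\Bigl(-\ln(1-2^{-i})\Bigr)\;\leq\;\sum_{i=1}^{\infty}2\cdot 2^{-i}\;=\;2,
\end{equation*}
and exponentiating yields $\prod_{i=1}^{\infty}(1-2^{-i})\geq e^{-2}$, as desired.

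The one auxiliary fact to nail down is the inequality $-\ln(1-x)\leq 2x$ for $0\leq x\leq 1/2$. I would prove this by noting the function $h(x)=2x+\ln(1-x)$ satisfies $h(0)=0$ and $h'(x)=2-\tfrac{1}{1-x}$, which is nonnegative exactly when $x\leq 1/2$; hence $h$ is nondecreasing on $[0,1/2]$ and so $h(x)\geq h(0)=0$ there. Alternatively one can avoid calculus entirely using $-\ln(1-x)=\sum_{k\geq 1}x^k/k \leq \sum_{k\geq 1}x^k = \tfrac{x}{1-x}\leq 2x$ for $x\le 1/2$, which is perhaps cleaner for a self-contained write-up.

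There is no real obstacle here — the only thing to be slightly careful about is that the infinite product converges (which is immediate since $\sum 2^{-i}<\infty$, so the product converges to a strictly positive limit) and that the termwise log bound is applied only for arguments in the valid range, which holds since $2^{-i}\leq 1/2$ for all $i\geq 1$. A possible minor tightening, not needed for the claim: the constant $2$ in the exponent is not optimal, since $2x$ can be replaced by $x/(1-x)$ giving $\sum_i 2^{-i}/(1-2^{-i})$, but summing this exactly is unnecessary for the stated bound, so I would stop at the clean estimate above.
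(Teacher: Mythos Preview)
Your proof is correct and clean. The termwise bound $-\ln(1-x)\le 2x$ on $[0,1/2]$ is valid (your derivative or series argument both work), and summing gives exactly $2$, so exponentiating yields the claim.

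The paper takes a different route: it applies the AM--GM inequality to the reciprocals $a_i = (1-2^{-i})^{-1}$, obtaining
\[
\Bigl(\prod_{i=1}^{N} a_i\Bigr)^{1/N}\le \frac{1}{N}\sum_{i=1}^{N} a_i \le 1+\frac{2}{N},
\]
then raises to the $N$th power and lets $N\to\infty$ so that $(1+2/N)^N\to e^2$. Your logarithmic approach is more direct and avoids the limiting step entirely; it also makes transparent where the constant $2$ comes from (the slope bound $-\ln(1-x)\le 2x$) and, as you note, immediately suggests the sharper estimate via $x/(1-x)$. The paper's AM--GM argument is a nice trick but requires a bit more care in the passage to the limit. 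Either proof is perfectly adequate here.
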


\begin{proof}
We prove the inequality by applying A.M-G.M. inequality on positive real nos. $\{a_{1},...,a_{N}\}$ where $a_{i}:=\frac{1}{1-2^{-i}}$ followed by taking the limit $N\rightarrow \infty$.

Applying A.M.-G.M. inequality on $\{a_{1},...,a_{N}\}$, we get:

$$(a_{1}a_{2}...a_{N})^{1/N}\leq \frac{\sum_{i=1}^{N}a_{i}}{N}$$

This implies,
$$\bigg(\prod_{i=1}^{N}(1-2^{-i})^{-1}\bigg)^{1/N}\leq \frac{\sum_{i=1}^{N}\frac{2^{i}}{2^{i}-1}}{N}.$$

Now simplifying the above inequality we obtain the following set of inequalities:
\begin{align*}
    \bigg(\prod_{i=1}^{N}(1-2^{-i})^{-1}\bigg)^{1/N}&\leq 1+\frac{\sum_{i=1}^{N}2^{-(i-1)}}{N}\\
    &=1+\frac{2}{N}(1-2^{-N})\leq 1+2/N.
\end{align*}
This implies,

$$\prod_{i=1}^{N}(1-2^{-i})^{-1}\leq (1+2/N)^{N}.$$

Taking $N\rightarrow \infty$ we get the desired inequality.
\end{proof}
\subsection{Characterization of Boolean functions with $\mathsf{mbs}(f)=1$}\label{char of mbs=1}

In this section, we provide another noticeable difference in the behaviour of $\bs$ and $\mathsf{mbs}$. We already know from \cite{Tal13} that for all non-monotone functions $\bs(f)\geq2$. Interestingly, the same is not true for $\mathsf{mbs}$ i.e. there are non-monotone Boolean functions for which $\mathsf{mbs}(f)=1$. For example, the function:
$$ODD-MAX-BIT(X_{S_{1}},X_{S_{2}},...,X_{S_{k}}):=X_{S_{1}}-X_{S_{1}}X_{S_{2}}+X_{S_{1}}X_{S_{2}}X_{S_{3}}-...+(-1)^{k}\prod_{i=1}^{k}X_{S_{i}},$$
where the product $X_{S}X_{T}:=X_{S\cap T}X_{S\setminus T\cup T\setminus S}$, has $\mathsf{mbs}(f)=1$.

What we now show is that the $ODD-MAX-BIT$ is in fact the ``only" function with $\mathsf{mbs}(f)=1$. 
To prove the aforementioned result we need the following claim about the structure of Boolean functions with $\mathsf{mbs}(f)\leq 1$.

\begin{claim}\label{mbs(f)=1}
    If $f:\zone^{n}\rightarrow\zone$ is a Boolean function with $\mathsf{mbs}(f)= 1$ and $f(0^{n})=0$ then there exists an input $x\in\zone^{n}$ s.t. for all $y\in\zone^{n}$, if $y\ngeq x$ then $f(y)=0$.
    
\end{claim}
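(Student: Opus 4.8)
The plan is to study the poset $f^{-1}(1)$ under the coordinatewise order $\le$ (so $u\le v$ iff $\supp(u)\subseteq\supp(v)$) and to show that, unless $f\equiv 0$, it has a \emph{unique} minimal element $m^{*}$; then $x:=m^{*}$ works, since every $1$-input lies above some minimal element of $f^{-1}(1)$, hence above $m^{*}$, so $f(y)=0$ for every $y\ngeq m^{*}$. (If $f\equiv 0$, any $x$, e.g.\ $x=0^{n}$, works trivially.) Two preliminary observations set this up: because $f(0^{n})=0$, every minimal element $m$ of $f^{-1}(1)$ has $\supp(m)\neq\emptyset$; and $\supp(m)$ is a \emph{minimal} monotone sensitive block of $f$ at $0^{n}$, since flipping it changes the value from $0$ to $1$ while no proper subset does (a proper subset that did would give a $1$-input strictly below $m$).

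The crux is to show there cannot be two distinct minimal elements; this is the step I expect to be the heart of the argument, although it turns out to be short, and it is exactly where the non-monotonicity of $f$ has to be handled. Suppose $m_{1}\neq m_{2}$ are both minimal in $f^{-1}(1)$. Being distinct minimal elements they are incomparable, so $C:=\supp(m_{1})\cap\supp(m_{2})$ is a \emph{proper} subset of each of $\supp(m_{1}),\supp(m_{2})$. Let $p$ be the point with $\supp(p)=C$. The key point is that $p$ lies strictly below both minimal $1$-inputs, so minimality forces $f(p)=0$ — we never need the value of $f$ on any other intermediate input. Now $B_{1}:=\supp(m_{1})\setminus C$ and $B_{2}:=\supp(m_{2})\setminus C$ are nonempty (because $C$ is proper), disjoint, and consist only of $0$-coordinates of $p$; and flipping $B_{i}$ sends $p$ to the indicator of $\supp(m_{i})$, whose value is $f(m_{i})=1\neq 0=f(p)$. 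Hence $B_{1},B_{2}$ are disjoint monotone sensitive blocks at $p$, giving $\mathsf{mbs}(f,p)\ge 2$, which contradicts $\mathsf{mbs}(f)=1$. (When $C=\emptyset$ the same argument degenerates to the point $0^{n}$ itself, with $\supp(m_{1}),\supp(m_{2})$ as the two disjoint sensitive blocks.)

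Finally I would assemble the pieces: if $f^{-1}(1)=\emptyset$ we are done; otherwise $f^{-1}(1)$ is a nonempty finite poset, so it has a minimal element, and by the previous paragraph exactly one, call it $m^{*}$, with $\supp(m^{*})\neq\emptyset$ since $f(0^{n})=0$; then every $y$ with $f(y)=1$ satisfies $y\ge m^{*}$, so $x:=m^{*}$ has the stated property. The one bookkeeping point to be careful about is the translation between ``disjoint monotone sensitive blocks living among the $0$-coordinates of $p$'' and the formal definition $\mathsf{mbs}(f,p)=\bs(f_{p},0^{\,n-|p|})$, together with the use of incomparability to guarantee that $C$ is a proper subset (so that $B_{1},B_{2}$ are nonempty). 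Notably, none of the composition lemmas or other earlier results are needed for this claim.
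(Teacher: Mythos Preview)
Your proof is correct and follows essentially the same approach as the paper: both arguments take a minimal $1$-input (the paper picks one of minimum Hamming weight, you pick a poset-minimal one and prove uniqueness), form the meet $p=x\wedge y$ with a hypothetical second $1$-input, observe $f(p)=0$, and exhibit $\supp(x)\setminus\supp(p)$ and $\supp(y)\setminus\supp(p)$ as two disjoint monotone sensitive blocks at $p$ to contradict $\mathsf{mbs}(f)=1$. Your version is more carefully spelled out, but the idea is the same.
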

\begin{proof}
Consider a Boolean function $f:\zone^{n}\rightarrow\zone$ with $\mathsf{mbs}(f)= 1$ and $f(0^{n})=0$ and let $k$ be the smallest integer s.t. $f(x)=1$ and $|x|=k$. Now if we assume that there exists a $y\ngeq x$ with $f(y)=1$ then we have $\mathsf{mbs}(f,x\wedge y)\geq 2$ which contradicts the assumption of $\mathsf{mbs}(f)= 1$.
\end{proof}

\begin{lemma}
    If $f:\zone^{n}\rightarrow\zone$ is a Boolean function with $\mathsf{mbs}(f)\leq 1$ then $f$ can be expressed as $ODD-MAX-BIT(X_{S_{1}},X_{S_{2}},...,X_{S_{k}})$ or 1-($ODD-MAX-BIT(X_{S_{1}},X_{S_{2}},...,X_{S_{k}}))$ where $X_{S_{i}}$ are monomials corresponding to set $S_{i}\subseteq[n]$.
\end{lemma}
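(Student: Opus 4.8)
The plan is to peel $f$ apart one ``layer'' at a time with \autoref{mbs(f)=1}, producing a strictly increasing chain of sets $S_1 \subsetneq S_2 \subsetneq \cdots \subsetneq S_k \subseteq [n]$ such that $f$ (or $1-f$) equals $ODD-MAX-BIT(X_{S_1},\dots,X_{S_k})$. Two easy invariances drive the argument. First, $\mathsf{mbs}(1-f)=\mathsf{mbs}(f)$, since $\mathsf{mbs}(f,x)=\bs(f_x,0^{n-|x|})$ depends only on the restriction $f_x$, block sensitivity is unchanged by negating the output, and $(1-f)_x=1-f_x$. Second, restricting $f$ by fixing some coordinates to $1$ cannot increase $\mathsf{mbs}$: disjoint monotone sensitive blocks for the restriction are, verbatim, disjoint monotone sensitive blocks for $f$. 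Using the first invariance I may assume $f(0^n)=0$ (otherwise run the argument on $1-f$ and conclude $f=1-ODD-MAX-BIT(\dots)$); and if $\mathsf{mbs}(f)=0$ then $f$ is constant, hence $f\equiv 0 = ODD-MAX-BIT()$ (empty sequence), so assume $\mathsf{mbs}(f)=1$.

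I would then build the chain recursively. Set $S_0=\emptyset$, $C_0=\{0,1\}^n$, $h_0=f$, and maintain the invariant that $C_i=\{y:S_i\subseteq\supp(y)\}$, that $h_i$ is a Boolean function on $C_i$ with $\mathsf{mbs}(h_i)\le 1$ and $h_i=0$ at the bottom point of $C_i$, and that $h_i$ agrees with $f$ on $C_i$ when $i$ is even and with $1-f$ on $C_i$ when $i$ is odd. Given $h_{i-1}$: if $h_{i-1}\equiv 0$ on $C_{i-1}$, stop and put $k=i-1$. Otherwise let $x^{(i)}$ be a minimum-weight point of $C_{i-1}$ with $h_{i-1}(x^{(i)})=1$; viewing $C_{i-1}$ as a Boolean cube on the coordinates outside $S_{i-1}$ (with $h_{i-1}$ vanishing at its bottom), \autoref{mbs(f)=1} gives $h_{i-1}(y)=0$ for every $y\in C_{i-1}$ with $y\not\geq x^{(i)}$. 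Put $S_i=\supp(x^{(i)})$, $C_i=\{y:S_i\subseteq\supp(y)\}$, and $h_i=(1-h_{i-1})$ restricted to $C_i$; the two invariances give $\mathsf{mbs}(h_i)\le 1$, the choice of $x^{(i)}$ gives $h_i=0$ at the bottom of $C_i$, and the parity bookkeeping is immediate. Since the bottom of $C_{i-1}$ is a $0$-input of $h_{i-1}$ while $x^{(i)}$ is a $1$-input, $S_i\supsetneq S_{i-1}$ strictly, so $|S_i|$ strictly increases and the recursion terminates at some $k\le n$ with $h_k\equiv 0$ on $C_k$. By construction $h_{i-1}$ vanishes on $C_{i-1}\setminus C_i$ for each $i\le k$, and $h_k$ vanishes on all of $C_k$.

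It then remains to check $f=ODD-MAX-BIT(X_{S_1},\dots,X_{S_k})$. Because the $S_i$ are nested, the multilinear product $X_{S_1}X_{S_2}\cdots X_{S_i}$ collapses to $X_{S_i}$, so $ODD-MAX-BIT(X_{S_1},\dots,X_{S_k})=\sum_{i=1}^{k}(-1)^{i+1}X_{S_i}$; evaluated at $y$ this is $\sum_{i=1}^{m}(-1)^{i+1}$, where $m=m(y)$ is the largest index with $S_i\subseteq\supp(y)$ (and $m=0$ if none), which equals $1$ when $m$ is odd and $0$ when $m$ is even or zero. On the other side, if $m=0$ then $y\notin C_1$ and $f(y)=h_0(y)=0$; if $1\le m\le k$ then $y$ lies in $C_m\setminus C_{m+1}$ (reading $C_{k+1}=\emptyset$), so $h_m(y)=0$, hence $f(y)=h_m(y)=0$ if $m$ is even and $f(y)=1-h_m(y)=1$ if $m$ is odd. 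In every case $f(y)$ matches the value of $ODD-MAX-BIT(X_{S_1},\dots,X_{S_k})$ at $y$; in the complemented case $f(0^n)=1$ one gets instead $f=1-ODD-MAX-BIT(X_{S_1},\dots,X_{S_k})$.

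The only delicate point is the recursion bookkeeping: tracking the alternation between $f$ and $1-f$ across layers, making sure \autoref{mbs(f)=1} genuinely applies on each subcube $C_{i-1}$ (it is an honest Boolean cube on the free coordinates with the restricted function zero at its minimum), and using that $\mathsf{mbs}$ is unchanged by negation and non-increasing under fixing variables to $1$. These two invariances are precisely what forces the two alternatives ($ODD-MAX-BIT$ and its complement) appearing in the statement.
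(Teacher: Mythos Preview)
Your proof is correct and follows essentially the same strategy as the paper: both repeatedly apply \autoref{mbs(f)=1} to peel off a minimal $1$-input, complement, and recurse on the subcube above it. The paper packages this as an induction on the arity $n$ (writing $f=X_S\cdot g$ with $g=f_x$ and invoking the hypothesis on $g$), whereas you unroll the same recursion into an explicit chain $S_1\subsetneq\cdots\subsetneq S_k$ and then verify directly that $f$ agrees with $\sum_{i=1}^{k}(-1)^{i+1}X_{S_i}$; your explicit final verification and the two stated invariances ($\mathsf{mbs}$ is unchanged by output negation and non-increasing under fixing variables to $1$) make the bookkeeping cleaner than in the paper's sketch.
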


\begin{proof}
    We prove the lemma by applying induction on the arity of the Boolean function $f$ i.e. $n$. For the base case let $n=1$.

    Now if $f$ is a constant function then $f(x)=X_{\phi}$ or $1-X_{\phi}$ where $X_{\phi}=1$. If it is not then $f(x)=x_{1}$ or $1-x_{1}$. We see that the condition is satisfied for the base case of $n=1$.

    Now for the inductive step assume that $f(0^{n})=0$ with $n=k$ and $\mathsf{mbs}(f)\leq 1$. Using \autoref{mbs(f)=1} we have that there exists a $x\in\zone^{n}$ s.t. $f(x)=1$ and for all $y\ngeq x$ we have $f(y)=0$. This implies,
    $$f(x)=X_{S}g(x),$$
    where $S:=supp(x)$ and $g:\zone^{n-|x|}\rightarrow\zone$ is the restriction of $f$ on the support of $x$ i.e. $g:=f_{x}$. As $g$ is the restriction of $f$ on $x$ hence $\mathsf{mbs}(g)\leq 1$. For the case when $\mathsf{mbs}(g)=0$ i.e. $g$ is constant, we have $f(x)=0=1-X_{\phi}$ or $f(x)=X_{S}$=$OMB(X_{S})$ for the case when $g(x)=0$ or $1$ respectively. This implies we can assume $g$ to be a non-constant function i.e. $\mathsf{mbs}(g)=1$. Now using the induction hypothesis we have:
    $$g(x)=X_{1}-X_{S_{1}}X_{S_{2}}+...+(-1)^{k}X_{S_{1}}X_{S_{2}}...X_{S_{k}}=OMB(X_{S_{1}},X_{S_{2}},...,X_{S_{k}}),$$
    for some $S_{1},...,S_{k}\subseteq[n]$. Note that over here we could have also assumed that $g(x)=1-OMB(X_{S_{1}},...,X_{S_{k}})$

    Hence $f(x)=X_{S\cup S_{1}}-X_{S\cup S_{1}}X_{S_{2}}+...+(-1)^{k}X_{S\cup S_{1}}X_{S_{2}}...X_{S_{k}}=OMB(X_{S\cup S_{1}},X_{S_{2}},...,X_{S_{k}})$ . 
\end{proof}

\section{Relation between $\mathsf{mbs}(f)$ and $\log(\spar(f))$}\label{mbs vs logspar}

Sensitivity $\s(f)$ and Fourier degree $\deg(f)$ are two very well studied complexity measures on Boolean functions. Huang, in his landmark result~\cite{Huang}, explicitly proved $\deg (f) \leq s(f)^2$ to show sensitivity is polynomially related to other complexity measures. In the other direction, Nisan and Szegedy~\cite{NS94} showed that $\s(f) \leq \deg(f)^2$ around thirty years ago (we still don't know if this relation is tight). The article~\cite{KLMY21} used this relation to show $\mathsf{mbs}(f)=O(\log^{2}(\spar(f)))$. We show that improving upper bound $\mathsf{mbs}(f)=O(\log^{2}(\spar(f)))$ is indeed equivalent to improving the upper bound on degree in terms of sensitivity (a long standing open question).


Recall that \autoref{s(f)-deg} states: 
suppose, there exists a constant $\alpha$ such that for every Boolean function $f:\zone^{n}\rightarrow\zone$, $\mathsf{mbs}(f)=O(\log^{\alpha}{\spar(f)})$. Then for every Boolean function $f:\zone^{n}\rightarrow\zone$, $\s(f)=O(\deg^{\alpha}(f))$.

To prove \autoref{s(f)-deg}, we will need the following known relation between Fourier degree and Fourier sparsity (see e.g. the proof of Fact 5.1 in \cite{ODW+}). A proof is included for completeness.

\begin{claim}
For $f:\{-1,1\}^{n}\rightarrow \{-1,1\}$, $\spar(f)\leq 4^{\deg(f)}$.
\end{claim}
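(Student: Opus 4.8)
The plan is to combine Parseval's identity with the fact that the nonzero Fourier coefficients of a $\pm 1$-valued function of degree $d := \deg(f)$ cannot be too small in magnitude. Writing $f = \sum_{S \subseteq [n]} \hat f(S)\,\chi_S$ with $\chi_S(x) = \prod_{i \in S} x_i$, since $f$ is $\{-1,1\}$-valued we have $f^2 \equiv 1$ and hence $\sum_{S} \hat f(S)^2 = \E[f^2] = 1$. If every nonzero coefficient satisfies $|\hat f(S)| \ge 2^{-d}$, then $1 = \sum_{S:\hat f(S)\neq 0}\hat f(S)^2 \ge \#\{S : \hat f(S)\neq 0\}\cdot 4^{-d} \ge \spar(f)\cdot 4^{-d}$, which gives $\spar(f) \le 4^{d}$ as claimed (counting the empty set among the supports only makes this bound easier). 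So the whole proof reduces to the lower bound on coefficient magnitudes.

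For that, I will prove the slightly cleaner statement that $2^{d}\,\hat f(S)$ is an integer for every $S$ (whence $|\hat f(S)| \ge 2^{-d}$ whenever $\hat f(S) \neq 0$). This I prove by induction on the number of variables $n$. When $n \le d$ it is immediate, since $2^{n}\hat f(S) = \sum_{x} f(x)\chi_S(x) \in \Z$ and $d \ge n$. For the inductive step $n > d$, fix $S$; I may assume $\hat f(S) \neq 0$, so $|S| \le d < n$ and there is a coordinate $i \notin S$. Restricting that coordinate gives $g := f|_{x_i = 1}$ and $h := f|_{x_i = -1}$, which are $\pm 1$-valued of degree $\le d$ on $n-1$ variables, with $\hat f(S) = \frac12\bigl(\hat g(S) + \hat h(S)\bigr)$ and $\hat g(S) - \hat h(S) = 2\,\hat f(S \cup \{i\})$. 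By the inductive hypothesis $2^{d}\hat g(S)$ and $2^{d}\hat h(S)$ are integers, so $2^{d}\hat f(S)$ is an integer as soon as these two integers have the same parity; their difference is $2^{d+1}\hat f(S\cup\{i\})$, so it suffices to know that $2^{d}\hat f(S\cup\{i\}) \in \Z$. Since $|S \cup \{i\}| = |S| + 1$, I arrange this by running a secondary (downward) induction on $|S|$: the base case $|S| = d$ needs no parity argument, because there $\hat f(S\cup\{i\}) = 0$ (its support set has size $d+1 > d$), so $\hat g(S) = \hat h(S) = \hat f(S)$ and the inductive hypothesis on $g$ applies directly; each smaller value of $|S|$ then uses the already-established integrality of the strictly larger coefficient $2^{d}\hat f(S \cup \{i\})$.

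The only delicate point is precisely this parity bookkeeping in the inductive step: the naive recursion $\hat f(S) = \frac12(\hat g(S) + \hat h(S))$ loses a factor of $2$ at each level and by itself would only yield $|\hat f(S)| \ge 2^{-n}$, which is useless here. The degree hypothesis has to be fed in through the vanishing of high-degree coefficients — used both in the base case $|S| = d$ of the inner induction and, more quietly, to guarantee $|S| \le d < n$ so that a free coordinate $i \notin S$ always exists. Everything else is routine.
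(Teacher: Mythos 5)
Your proof is correct, and it shares the paper's overall skeleton: show that every Fourier coefficient of a degree-$d$, $\pm 1$-valued function is an integer multiple of $2^{-d}$, then apply Parseval's identity $\sum_S \hat f(S)^2 = 1$ to conclude that at most $4^{d}$ coefficients can be nonzero. Where you differ is in how the granularity lemma is established. The paper passes to the $\{0,1\}$-domain representation $g(x)=\sum_S \alpha_S \prod_{i\in S}x_i$, observes that the $\alpha_S$ are integers (a one-line induction for integer-valued functions, relegated to a footnote), and then substitutes $x_i=(1-y_i)/2$: since every surviving monomial has $|S|\le \deg(f)$, each Fourier coefficient acquires a denominator of at most $2^{\deg(f)}$. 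You instead stay entirely in the $\{-1,1\}$ Fourier basis and prove $2^{d}\hat f(S)\in\Z$ by an outer induction on the number of variables combined with a downward induction on $|S|$, restricting a coordinate $i\notin S$ and using that the parity of $2^{d}\hat g(S)-2^{d}\hat h(S)=2^{d+1}\hat f(S\cup\{i\})$ is controlled by the already-established integrality one level up, with the degree hypothesis entering through the vanishing of coefficients on sets of size exceeding $d$ (and through the existence of a free coordinate $i\notin S$). Both arguments are sound; the paper's change-of-basis route is shorter and hides the arithmetic in the integrality of the $\{0,1\}$-coefficients, while yours is self-contained in the Fourier world at the price of the two-level induction and parity bookkeeping you flag. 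Your treatment of the edge points — the empty set in the sparsity count, the trivial case $\hat f(S)=0$, and the remark that the naive halving recursion alone would only give $|\hat f(S)|\ge 2^{-n}$ — is accurate; incidentally, a sharper granularity of $2^{1-d}$ is known, but $2^{-d}$ is all that is needed for the stated bound.
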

\begin{proof}
Let $f: \{-1,1\}^{n}\rightarrow \{-1,1\}$. Define $g: \zone^n \rightarrow \{-1,1\}$ by $g(x_1, x_2, \dots, x_n) = f(1-2x_1, 1-2x_2, \dots, 1- 2 x_n)$ (notice that $\deg(g) = \deg(f)$). 

Let $g(x)=\sum_{S\subseteq[n]}\alpha_{S} \prod_{i \in S} x_i$ be its polynomial representation. Since $g$ is integer-valued, all $\alpha_S$'s are integers \footnote{This can be seen by induction, using the fact that $\alpha_S$ is an integer linear combination of $f(S)$ (where we interpret $S$ as its indicator vector) and $\alpha_T$ for $T \subsetneq S$.}. 
    
Using the polynomial representation of $g$,
    $$f(y) = g(\frac{1-y_{1}}{2},...,\frac{1-y_{n}}{2})=\sum_{S\subseteq[n]}\alpha_{S}\prod_{i\in S}\bigg(\frac{1-y_{i}}{2}\bigg).$$

From this representation of $f$, every Fourier coefficient of $f$ is an integer multiple of $1/2^{\deg(f)}$. Say $\hat{f}(S)=\beta_{S}/2^{\deg(f)}$ for some $\beta_{S}\in\mathbb{Z}$. Using Parseval, $\sum_{S\subseteq [n]} \beta_S^2 = 4^{\deg(f)}$. Since $\beta_S$'s are integers, this implies that sparsity is at most $4^{\deg(f)}$. 
%
 %
\end{proof}

\begin{corollary} \label{cor:spar-deg}
    For $g:\zone^{n}\rightarrow\zone$, 
    $\spar(g) \leq 8^{\deg(g)}$.
\end{corollary}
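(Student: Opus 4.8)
The goal is to deduce Corollary~\ref{cor:spar-deg} from the preceding claim, which states that any $f:\{-1,1\}^n\to\{-1,1\}$ has Fourier sparsity at most $4^{\deg(f)}$. The plan is to pass from a $\zone$-valued function $g$ to a $\pmone$-valued function by the standard affine reparametrization, track how sparsity and degree change, and then apply the claim. Concretely, given $g:\zone^n\to\zone$ with polynomial representation $g(x)=\sum_S \alpha_S\prod_{i\in S}x_i$, I would define $h:\pmone^n\to\pmone$ by $h(z) = 1 - 2g\!\left(\frac{1-z_1}{2},\dots,\frac{1-z_n}{2}\right)$, so that $h$ takes values in $\{-1,1\}$ exactly when $g$ takes values in $\zone$, and the substitution $x_i\mapsto (1-z_i)/2$ is a degree-preserving change of variables; hence $\deg(h)=\deg(g)$.

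The key step is to compare $\spar(g)$ with $\spar(h)$. Expanding $\prod_{i\in S}\frac{1-z_i}{2}$ in the $z$-monomial basis produces, up to the factor $2^{-|S|}$, a signed sum over all subsets of $S$, so every monomial $\prod_{i\in S}x_i$ in $g$ contributes (after the affine substitution and the shift/scale defining $h$) to the Fourier coefficients $\hat h(T)$ for all $T\subseteq S$. Therefore the Fourier support of $h$ is contained in the downward closure of the support of $g$ together with the empty set: $\supp(\hat h)\subseteq \{T : T\subseteq S \text{ for some } S\in\supp(g)\}\cup\{\emptyset\}$. This does not immediately bound $\spar(h)$ by a function of $\spar(g)$ alone, but it does bound it by a function of $\deg(h)=\deg(g)$ once we also use the claim; the cleanest route is simply to invoke the claim directly on $h$: $\spar(h)\le 4^{\deg(h)} = 4^{\deg(g)}$.

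Finally I would recover a bound on $\spar(g)$ from $\spar(h)$ in the other direction. Since the map from the $x$-monomial expansion of $g$ to the $z$-monomial (Fourier) expansion of $h$ is an invertible triangular change of basis (the inverse substitution $z_i = 1-2x_i$ is also affine and degree-preserving), each coefficient $\alpha_S$ of $g$ is an integer linear combination of the coefficients $\hat h(T)$ for $T\subseteq S$; in particular if $\alpha_S\ne 0$ then some $\hat h(T)\ne 0$ with $T\subseteq S$, but this alone over-counts. The quantitatively efficient statement is instead: expanding $h$ back in terms of $g$'s monomials shows each nonzero $\hat h(T)$ forces, going the other way, at most a blow-up governed by the number of supersets within degree $\deg(g)$. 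The slick accounting that yields the constant $8$ is: $\spar(g)\le \sum_{k=1}^{\deg(g)}\binom{\deg(g)}{k}\cdot(\text{something})$ — rather than chase this, I would just note $g(x) = \frac{1-h(1-2x)}{2}$ and that the $z\mapsto 1-2x$ substitution expands each Fourier monomial $\prod_{i\in T}z_i$ of $h$ into at most $2^{|T|}\le 2^{\deg(g)}$ monomials in $x$, so $\spar(g)\le 1 + \spar(h)\cdot 2^{\deg(g)} \le 1 + 4^{\deg(g)}\cdot 2^{\deg(g)} = 1 + 8^{\deg(g)}$, and absorb the additive $1$ (the empty monomial is excluded from $\spar$ anyway, and for $\deg(g)\ge 1$ we have $1+8^{\deg(g)}\le$ a clean bound; the stated $\spar(g)\le 8^{\deg(g)}$ follows after the bookkeeping, since the constant term does not count toward $\spar$). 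The main obstacle is precisely this bookkeeping: making sure the additive constant and the exclusion of the empty set are handled so that the bound comes out exactly $8^{\deg(g)}$ rather than $8^{\deg(g)}+1$; everything else is a routine, degree-preserving affine change of variables.
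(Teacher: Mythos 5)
Your proposal is correct and follows essentially the same route as the paper: define the $\pmone$-valued function $h(z)=1-2g\left(\frac{1-z_1}{2},\dots,\frac{1-z_n}{2}\right)$, apply the preceding claim to get $\spar(h)\le 4^{\deg(g)}$, and undo the substitution, with each Fourier monomial of $h$ expanding into at most $2^{\deg(g)}$ monomials of $g$, which gives $\spar(g)\le 2^{\deg(g)}\spar(h)\le 8^{\deg(g)}$ exactly as in the paper. The additive $+1$ you worry about never actually arises, since the paper's definition of $\spar$ excludes the constant term, so no further bookkeeping is needed.
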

\begin{proof}
Let $f: \{-1,1\}^n \rightarrow \{-1,1\}$ be defined by $f(x)~=~1-2 g(\frac{1-x_{1}}{2},...,\frac{1-x_{n}}{2})$ similar to what was done in the previous claim. This is equivalent to \\
$g(x_{1},...,x_{n})=(1-f(1-2x_{1},...,1-2x_{n}))/2$. By the above claim,
\[
\spar(g)\leq 2^{\deg(f)}\spar(f)\leq 8^{\deg(g)}.
\] 
\qed
\end{proof}    

Now we can prove \autoref{s(f)-deg}.
\begin{proof}[Proof of \autoref{s(f)-deg}]
    Let $w \in \{0,1\}^n$ be such that $\s(f)=\s(f,w)$. Consider the function $\tilde{f}$ defined by
    $$\tilde{f}(x):=f(x \oplus w),$$
    where $x \oplus w$ denotes the bitwise XOR of $x$ and $w$.

    Observe that $\s(\tilde{f},0^n)=\s(f,w)$. Also, $\deg(\tilde{f}) \leq \deg(f)$ since performing an affine substitution cannot increase the degree.

    Using the given condition, $\mathsf{mbs}(\tilde{f})=O(\log^{\alpha}{\spar(\tilde{f})})$ and the fact $\mathsf{mbs}(\tilde{f},0)=\bs(\tilde{f},0)\geq \s(\tilde{f},0)=\s(f, w)$ it follows that  
    $\s(f) = O(\log^{\alpha}{\spar(\tilde{f})})$.
    Finally, by \autoref{cor:spar-deg} and $\deg(\tilde{f}) \leq \deg(f)$, we get $\s(f) = O(\deg^\alpha(\tilde{f})) = O(\deg^{\alpha}(f))$ as desired.    
\end{proof}

Hence improving the bound on $\mathsf{mbs}(f)$ in terms of $\log(\spar(f))$ is equivalent to improving the upper bound on $\s(f)$ in terms of $\deg(f)$. The other possible approach of improving the bound on $D^{0-dt}(f)$ is to improve the upper bound on $\mathsf{fmbs}(f)$ in terms of $\log(\spar(f))$. 

It turns out that for the class of symmetric and monotone Boolean functions $\mathsf{mbs}(f)= \mathsf{fmbs}(f)$ = $\hsc(f)$, giving a much better upper bound on $\mathsf{fmbs}(f)$ in terms of sparsity of $f$.

\section{Boolean functions with $\ms(f)=\hsc(f)$}\label{mbs-hsc}







In this section, we look at classes of Boolean functions for which $\mathsf{mbs}(f)=\hsc(f)$. We get that for such class of functions $\hsc(f)=O(\log^{2}(\spar(f)))$ (an improvement over the relationship $\hsc(f)=O(\log^{5}{\spar(f)})$ proved in \cite{KLMY21}).


\autoref{thm:symm_monotone} states that if $f:\zone^{n}\rightarrow \zone$ is monotone or symmetric Boolean function, then 
$$\ms(f) = \mathsf{mbs}(f) = \mathsf{fmbs}(f) = \hsc(f).$$  
\begin{proof}[Proof of \autoref{thm:symm_monotone}]

We prove the two cases separately.

Case 1 ($f$ is monotone): it suffices to show that for monotone Boolean functions $\hsc(f)\leq \ms(f)$.
Let $x$ be an input for which $\hsc(f,x)=\hsc(f)$.

Without loss of generality assume that $f$ is monotonically increasing and $f(x)=0$. Let $C$ be a minimal certificate for $f_{x}$ at the corresponding all $0^{n-|x|}$ string s.t. $|C|=\hsc(f,x)$. Now consider the input $y$ defined as follows:
$$y_{i}:=\begin{cases}
1 & i\text{$\notin$ C}\\
0 & i\text{$\in$ C}
\end{cases}$$

As $y$ agrees with the all zero string $0^{n-|x|}$ at the indices in $C$ we have that $f_{x}(0)=f_{x}(y).$
Now we claim that each of the indices in $C$ is sensitive for $f_{x}$ at $y$. If it wasn't the case then there exists an $i\in C$ s.t. $f_{x}(y^{\oplus i})=f_{x}(y)$. As $f_{x}$ is also monotone hence for all $z\in\zone^{n-|x|}$ for which $\; y^{\oplus i}\geq z$ we have $f_{x}(y^{\oplus i})\geq f_{x}(z)$.

For any string $z^{'}\in\zone^{n-|x|}$ that agrees with $0^{n-|x|}$ at the bits in $C\setminus{i}$ notice that $z^{'}\leq y^{\oplus i}$. Hence $0=f_{x}(y^{\oplus i})\geq f_{x}(z^{'})=0$. But this implies that $C\setminus \{i\}$ is a certificate for $f_{x}$ at $0^{n-|x|}$ which is a contradiction.

Hence, $$\hsc(f)=\hsc(f,x)=\ms(f_{x},y)\leq \ms(f_{x})\leq \ms(f).$$




Case 2 ($f$ is symmetric): again, it suffices to show that if $f:\zone^{n}\rightarrow\zone$ is symmetric then
       $$\hsc(f)\leq \ms(f).$$



       Let $x\in\zone^{n}$ be one of the inputs for which $\hsc(f)=\hsc(f,x)=C(f_{x},0^{n-|x|})$. Now, let $C$ be the witness for $\hsc(f,x)$. If $|C|=n-|x|$ then this would imply that all the bits of $0^{n-|x|}$ are sensitive for $f_{x}$. For the other case, i.e. $|C|<n-|x|$,what we claim is that $\hsc(f,x)=\hsc(f,z )$ where $z$ is the input s.t.$\supp(z)= \supp(x)\cup\{i:i\in[n]\setminus(\supp(x)\cup C)\}$ i.e. z is set to $1$ at the bits lying in $\supp(x)$ and all the bits not lying in $C$.

       We claim that $\hsc(f,z)=C(f_{z},0^{n-|z|})=|C|=n-|z|$ i.e. all the bits of $0^{n-z}$ are sensitive for $f_{z}$. If we assume this is not the case i.e. $C^{'}\subsetneq C$ is a certificate for $0^{n-|z|}$ with $|C^{'}|=|C|-1$ then it would imply that $C^{'}$ is also a certificate for $0^{n-|x|}$. 
       
       To argue this, say $C^{'}=C\setminus i$. Now for any input $x<y< z$ that agrees with $0^{n-|x|}$ at the bits of $C^{'}$ we have that $f_{x}(0^{n-|x|})=f_{x}(y)$. This is from using the fact that $f_{x}$ is symmetric, hence we can always swap the $i-th$ bit of $y$ with a zero bit in $y$ not lying in $C$. 

       Hence by the above argument we would have that $C^{'}$ is a certificate for $0^{n-|x|}$ as well. But this contradicts the condition that $\hsc(f,x)=|C|$
\end{proof}


Using ~\autoref{thm:symm_monotone}, we get the following corollary.
\begin{corollary}
  Consider a Boolean function $f:\zone^n\rightarrow\zone$. 
  \begin{enumerate}
  \item If $f$ is monotone, $\hsc(f)=\mathsf{fmbs}(f)=\mathsf{mbs}(f)=O(\log^{2}(\spar(f)))$.
  \item If $f$ is symmetric, $\hsc(f)=\mathsf{fmbs}(f)=\mathsf{mbs}(f)=(1 + o(1))\log (\spar(f))$.
  \end{enumerate}
\end{corollary}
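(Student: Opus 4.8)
The plan is to read the corollary off \autoref{thm:symm_monotone} together with the sparsity bounds already available. By \autoref{thm:symm_monotone}, for $f$ monotone or symmetric the four quantities collapse, $\ms(f)=\mathsf{mbs}(f)=\mathsf{fmbs}(f)=\hsc(f)$, so in each case it suffices to bound just one of them---most conveniently $\mathsf{mbs}(f)$ (equivalently $\ms(f)$)---from above by a function of $\log\spar(f)$. Part~(1) is then immediate: $\mathsf{mbs}(f)=O(\log^{2}\spar(f))$ holds for every Boolean function~\cite{KLMY21}, so for monotone $f$ we get $\hsc(f)=\mathsf{fmbs}(f)=\mathsf{mbs}(f)=O(\log^{2}\spar(f))$ with no further work.

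For part~(2) I would make both sides of the desired inequality explicit through the value sequence of the symmetric function. Write $b_k\in\zone$ for the common value of $f$ on weight-$k$ inputs and, assuming $f$ is non-constant, let $k^{\star}:=\min\{k:b_k\neq b_{k+1}\}$ be the first jump. Each restriction $f_x$ with $|x|=k$ is again symmetric, and $\s(f_x,0^{\,n-k})$ equals $n-k$ when $b_k\neq b_{k+1}$ and $0$ otherwise; maximising over $x$ gives $\ms(f)=n-k^{\star}$, hence $\mathsf{mbs}(f)=n-k^{\star}$ by \autoref{thm:symm_monotone}. For sparsity, recall that in the multilinear polynomial of a symmetric function all degree-$d$ monomials share a single coefficient $c_d=\Delta^{d}b_0$ (the $d$-th finite difference of $(b_k)_k$); since $b_0=\dots=b_{k^{\star}}\neq b_{k^{\star}+1}$ we have $c_d=0$ for $1\le d\le k^{\star}$ and $c_{k^{\star}+1}=\pm1$, so $\spar(f)\ge\binom{n}{k^{\star}+1}=\binom{n}{\mathsf{mbs}(f)-1}$. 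Using $\binom{n}{s}\ge 2^{s}$ for $s\le n/2$, this already gives $\mathsf{mbs}(f)\le\log\spar(f)+1$ in the regime $\mathsf{mbs}(f)\le n/2+1$.

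The interesting regime is the opposite one, $k^{\star}\ll n/2$, where $\mathsf{mbs}(f)=n-k^{\star}$ is close to $n$ and the single coefficient $\binom{n}{k^{\star}+1}$ is too weak; there one must show that $\spar(f)$ is genuinely exponential. I would argue that if $c_a=\dots=c_b=0$ on an interval with $2a<b+3$, then $b_k$ coincides on $\{0,\dots,b\}$ with a polynomial of degree $<a$, and since $2a<b+3$ a $\{0,1\}$-valued polynomial of degree $<a$ on the $b+1$ points $0,\dots,b$ must be constant, forcing $k^{\star}\ge b$; taking $a$ and $b$ both of order $n$ thus produces a nonzero $c_{d_0}$ with $d_0=\Theta(n)$ and so $\spar(f)\ge\binom{n}{d_0}=2^{\Omega(n)}$. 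This already yields the $O(\log\spar(f))$ bound mentioned in the discussion.

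Extracting the \emph{sharp} constant $(1+o(1))$ is the delicate step: it requires pushing $d_0$ to within $o(n)$ of $n/2$ (the elementary argument above only reaches a constant strictly larger than $1$), and it also requires absorbing the handful of genuinely degenerate $\AND/\OR$-type symmetric functions, for which $\spar(f)$ is $O(1)$ but $\mathsf{mbs}(f)$ is still a positive constant, into the asymptotics. Rather than fight this, I would invoke Buhrman--de~Wolf~\cite{buhrman01}, whose exact analysis of the deterministic communication complexity of symmetric $\AND$-functions yields $\hsc(f)=(1+o(1))\log\spar(f)$ directly. Thus the main obstacle is not the qualitative statement---that drops out of \autoref{thm:symm_monotone}---but pinning down the leading constant in the symmetric case.
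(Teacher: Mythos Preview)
Your proposal ultimately lands on the same two ingredients as the paper---\autoref{thm:symm_monotone} plus the Knop et~al.\ bound for part~(1), and Buhrman--de~Wolf for part~(2)---so it is essentially correct. The paper's proof of part~(2), however, is considerably more direct than yours: it simply chains $\mathsf{mbs}(f)\le D^{cc}(f\circ\wedge_2)$ (from~\cite{KLMY21}) with $D^{cc}(f\circ\wedge_2)\le(1+o(1))\log\spar(f)$ (from~\cite{buhrman01}), and then invokes \autoref{thm:symm_monotone}. All of your elementary analysis of the value sequence $(b_k)$, the finite-difference coefficients $c_d$, and the two regimes is unnecessary for the corollary as stated.

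One small imprecision worth flagging: Buhrman--de~Wolf does \emph{not} give $\hsc(f)=(1+o(1))\log\spar(f)$ ``directly''---their result is about $D^{cc}(f\circ\wedge_2)$, and you still need the bridge $\mathsf{mbs}(f)\le D^{cc}(f\circ\wedge_2)$ from~\cite{KLMY21} (together with \autoref{thm:symm_monotone} to pass from $\mathsf{mbs}$ to $\hsc$). Your final sentence elides this step.
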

\begin{proof}
    For the statement about monotone functions, we combine \autoref{thm:symm_monotone} with the relation $\mathsf{mbs}(f) = O(\log^2(\spar(f)))$ ~\cite{KLMY21,buhrman01}.

    For symmetric functions, we use the relations $D^{cc}(f \circ \wedge_2) \leq (1 + o(1))\log(\spar(f))$ \cite{buhrman01} and $\mathsf{mbs}(f) \leq D^{cc}(f \circ \wedge_2)$ ~\cite{KLMY21}.
\end{proof}

Note that the bound above for symmetric functions is tight as can be seen by considering the $\OR$ function which has sparsity $2^n - 1$ and monotone sensitivity $n$.



\section{Examples of separations for monotone measures}
\label{appendix:table}
\begin{table}[ht]
    \centering
    \begin{tabular}{c||c|c|c|c|c}
      & $\mathsf{ms}$ & $\mathsf{mbs}$ & $\mathsf{fmbs}$ & $\mathsf{HSC}$ & $\log(\spar)$\\
      \hline
      \hline
       $\mathsf{ms}$ &\cellcolor{black} &$1$ & $1$& $1$ & $2$\\
      \hline
       $\mathsf{mbs}$ & ?&\cellcolor{black}& $1$&$1$& $2$\\
      \hline
       $\mathsf{fmbs}$ & ?&$2$&\cellcolor{black} & $1$ & $4$\\
      \hline
       $\mathsf{HSC}$ &? &?&? &\cellcolor{black} & $5$\\
      \hline
      \hline
    \end{tabular}
    \caption{Known relations for monotone measures:  Entry $b$ in row $A$ and column $B$ represents, 
for any function $f$,
$A(f) = O(B(f))^{b + o(1)}$, }
    \label{tab:my_label}
\end{table}
Here, we note some observations about monotone measures that follow from previous work. We have listed all the known relationships between complexity measures in \autoref{tab:my_label}. All non-trivial relationships follow from the results of Knop et. al. \cite{KLMY21}. A similar table for standard measures was compiled in \cite{ABK+}. The relationships between standard measures do not seem to straightforwardly imply relationships between monotone combinatorial measures. On the bright side, almost all the existing separations between classical complexity measures can be lifted for a monotone analogue of complexity measures, essentially in the same way as the proof of \autoref{s(f)-deg} as we explain later. 
For the relation part, it is natural and interesting to ask if monotone measures are polynomially related at all. Note that a row with $\log(\spar)$ doesn't make sense since it is not polynomially related to other monotone complexity measures.

\paragraph{Why we don't need a row for $\log(\spar)$?}

The function $(\AND_n \circ \OR_2)$ (example 2.18 in Knop et al \cite{KLMY21}) has sparsity exponential in $n$ but constant $\hsc$. So $\
log(\spar)$ can not be bounded by any polynomial power of the monotone complexity measures. 

\paragraph{Lifting separations between classical measures for monotone measures:}

  Let $M_1, M_2 \in \{\s, \bs, \fbs, \cert\}$ and let $mM_1, mM_2$ denote their respective monotone analogues. Suppose $f$ achieves a separation $M_1(f) \geq \Omega(M_2(f)^c)$ and suppose $y$ is the input where $M_1(f) = M_1(f, y)$. Consider the shifted function $g$ which maps $x$ to $f(x XOR y)$. Then $mM_1(g, 0^n) = M_1(f, y) = M_1(f) \geq \Omega(M_2(f)^c) = \Omega(M_2(g)^c) \geq \Omega(mM_2(g)^c)$.

We will give a precise example of the fact that classical separations can be lifted easily for monotone measures.


For example in terms of monotone measures \cite{KLMY21} proved that $\mathsf{fmbs}(f) = O(\mathsf{mbs}(f)^2)$ for all Boolean function $f$. It is natural to ask if the relation is tight or not. Consequently it comes to the best known separations between $\fbs$ and $\bs$ and to check if that example works for monotone measures as well. There exist classes of functions given by \cite{GSS16} that gives separations between $\fbs$ and $\bs$. Let us denote the function introduced by \cite{GSS16} by $\mathsf{GSS}$. 

\begin{theorem}[\cite{GSS16}]
\label{th: GSS}
    There exists a family of Boolean functions $\mathsf{GSS}$ for which $\fbs(\mathsf{GSS}) = \Omega(\bs(\mathsf{GSS})^{\frac{3}{2}})$. 
\end{theorem}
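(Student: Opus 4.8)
The goal is to exhibit a family of Boolean functions $\GSS$ witnessing $\fbs(\GSS) = \Omega(\bs(\GSS)^{3/2})$, and then (for our purposes) to lift this to a monotone separation using the shifting trick described above. I would build the function as a composition designed to decouple the integral and fractional optima of the block-sensitivity linear program. The plan is to start from a small gadget $h$ on a constant number of variables for which $\fbs(h, z)$ is strictly larger than $\bs(h, z)$ at some point $z$ --- the prototypical example being a function whose sensitive-block hypergraph has a fractional vertex cover / matching gap, e.g.\ built from a combinatorial design where each of $k$ minimal sensitive blocks pairwise overlaps in exactly one coordinate, so that $\bs = O(\sqrt{k})$ (a matching) while the all-$1/2$ (or all-$1/\sqrt k$) assignment is fractionally feasible with value $\Omega(k)$ vs $\Omega(\sqrt k)$ respectively. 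Then I would take $\GSS = h^{\circ t}$ (or $h$ composed with a suitable amplifier) and use the composition behaviour of $\bs$ and $\fbs$ to amplify the constant-factor gap into a polynomial one.

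The key steps, in order, are: (1) specify the gadget $h$ and the distinguished input $z$, and compute $\bs(h,z)$ and $\fbs(h,z)$ exactly (or up to constants), exhibiting a ratio $\fbs(h,z)/\bs(h,z) = c > 1$; (2) invoke multiplicativity under composition --- $\fbs(f \circ g, z^{nm}) \geq \fbs(f,z^n)\fbs(g,z^m)$ (this is \autoref{tal's} in the excerpt) --- to get $\fbs(h^{\circ t}) \geq \fbs(h,z)^t$; (3) bound $\bs(h^{\circ t})$ from above, using $\bs(f \circ g) \leq \bs(f)\bs(g)$, to get $\bs(h^{\circ t}) \leq \bs(h)^t$; (4) combine: $\fbs(h^{\circ t})/\bs(h^{\circ t}) \geq c^t$, and since on $N = |h|^t$ variables we have $\bs(h^{\circ t}) = \Theta(\bs(h)^t)$, choosing the exponent in $\bs(\GSS) = \Omega(\bs(\GSS)^{\gamma})$ appropriately (tuning the gadget so that $\log \fbs(h) / \log \bs(h) = 3/2$) yields $\fbs(\GSS) = \Omega(\bs(\GSS)^{3/2})$; (5) if a monotone version is wanted, apply the shifting argument: let $y$ be the input achieving $\fbs(\GSS) = \fbs(\GSS, y)$, replace $\GSS$ by $x \mapsto \GSS(x \oplus y)$, and observe $\fmbs(\cdot, 0^N) = \fbs(\GSS, y)$ while $\mathsf{mbs} \leq \bs$, preserving the gap.

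The main obstacle --- and the part requiring genuine combinatorial work rather than bookkeeping --- is step (1): finding a gadget whose sensitive-block structure realizes a clean $3/2$ exponent rather than an arbitrary constant gap. One needs the minimal sensitive blocks at $z$ to form a hypergraph whose fractional matching number is a specific power of its integral matching number, which is exactly the kind of design-theoretic object (e.g.\ based on projective planes or affine planes, giving $k$ lines each of size $\sqrt k$ with pairwise single-point intersections) that \cite{GSS16} construct; getting the exponents to land exactly at $3/2$ after composition, and verifying that no \emph{larger} integral matching sneaks in (so that the upper bound $\bs(h,z) = O(\sqrt k)$ is tight), is the delicate step. The secondary subtlety is confirming that $\bs$ is genuinely submultiplicative under composition so that the upper bound in step (3) is not lossy --- this is standard but must be checked so the ratio is not accidentally diluted.
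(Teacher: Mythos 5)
You are trying to prove a statement that the paper itself does not prove: \autoref{th: GSS} is imported verbatim from \cite{GSS16}, and the construction there is a \emph{direct}, design-based one — a single function on $n$ variables engineered so that $\fbs(\GSS,0^n)=\Omega(n^{3/4})$ while $\bs(\GSS)=O(n^{1/2})$ at \emph{every} input (these are exactly the parameters the paper uses right after the theorem to derive \autoref{lem: separation between fmbs and msb}). Your route — a constant-size gadget $h$ with a constant gap $\fbs(h)/\bs(h)>1$, amplified by iterated composition — is a genuinely different strategy, and it has a gap that I do not see how to repair.

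The gap is step (3). The amplification needs $\bs(h^{\circ t})\leq \bs(h)^t$, i.e.\ submultiplicativity of block sensitivity under composition, and this is not a known inequality that you can wave through as ``standard.'' The natural decomposition argument (shrink a disjoint family of sensitive blocks of $f\circ g$ at $x$ to minimal ones; each minimal block is a union, over a minimal sensitive block $S$ of $f$ at $y$, of minimal sensitive blocks of the copies of $g$) does \emph{not} give $\bs(f)\bs(g)$, because the outer blocks $S_j$ used by the different members of the family need not be disjoint; each coordinate $i$ is only capacity-bounded by $\bs(g,x^i)$, so what you get is $\bs(f\circ g)\leq \fbs(f)\cdot\bs(g)$ — compare the paper's own monotone analogue, \autoref{theo: mbs of composition}. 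With only that bound, $\bs(h^{\circ t})\leq \fbs(h)^{t-1}\bs(h)$, which grows at the same exponential rate as $\fbs(h^{\circ t})=\fbs(h)^t$, so the ratio is only guaranteed to be the constant gadget gap and never becomes a polynomial separation, let alone exponent $3/2$. (Indeed, a main message of \cite{GSS16} — ``composition limits'' — is precisely that this kind of powering cannot deliver the separation, which is why they build the example directly.) Two further problems: your gadget sketch is internally inconsistent — if the $k$ minimal sensitive blocks pairwise intersect, the maximum \emph{disjoint} subfamily has size $1$, not $\Theta(\sqrt{k})$, so the claimed $\bs=O(\sqrt k)$ at the distinguished input does not follow from that structure; and $\bs(\GSS)$ is a maximum over all inputs, so one must also control block sensitivity at $1$-inputs (how the $1$-certificates overlap), which is a substantial part of the actual construction and is never addressed in your plan. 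Step (5), the shift to monotone measures, is fine but is not part of the cited statement. As it stands, the real content of the theorem — an explicit family with the stated global bounds — is missing from the proposal.
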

Now $\mathsf{GSS}$ function is such that $\fbs(\mathsf{GSS}(0^n))= \Omega( n^{\frac{3}{4}})$ and $\bs(\mathsf{GSS}) = O(n^{\frac{1}{2}})$. Now, from the definition of monotone measures, it follows that $\mathsf{fmbs}(\mathsf{GSS}(0^n))= \Omega( n^{\frac{3}{4}})$ and  $\mathsf{mbs}(\mathsf{GSS}) = O(\bs(\mathsf{GSS}))= O(n^{\frac{1}{2}})$. Consequently, we have the following lemma,

\begin{lemma}
\label{lem: separation between fmbs and msb}
    There exists Boolean function for which, $\mathsf{fmbs}(\mathsf{GSS}) = \Omega(\mathsf{mbs}(\mathsf{GSS})^{\frac{3}{2}})$
\end{lemma}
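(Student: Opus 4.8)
The plan is to derive the separation directly from \autoref{th: GSS} together with the two quantitative facts about the $\mathsf{GSS}$ family already recorded above: $\fbs(\mathsf{GSS},0^n)=\Omega(n^{3/4})$ and $\bs(\mathsf{GSS})=O(n^{1/2})$. The key observation is that the all-zeros input is precisely where the standard measures and their monotone analogues coincide (on the $\fbs$ side) or where the monotone measure is dominated by the standard one (on the $\bs$ side), so both estimates transfer to the monotone setting with essentially no loss.

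\textbf{Step 1 (lower bound on $\mathsf{fmbs}$).} I would evaluate $\mathsf{fmbs}$ at $x=0^n$. Since the restriction $f_{0^n}$ equals $f$ itself, \autoref{defi: fmbs} gives $\mathsf{fmbs}(f,0^n)=\fbs(f_{0^n},0^n)=\fbs(f,0^n)$. Applying this to $\mathsf{GSS}$ and using $\mathsf{fmbs}(\mathsf{GSS})\ge \mathsf{fmbs}(\mathsf{GSS},0^n)$ yields $\mathsf{fmbs}(\mathsf{GSS})=\Omega(n^{3/4})$.

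\textbf{Step 2 (upper bound on $\mathsf{mbs}$, then combine).} Next I would use the general inequality $\mathsf{mbs}(f)\le \bs(f)$: for any input $x$, $f_x$ is a restriction of $f$, and a collection of pairwise disjoint sensitive blocks for $f_x$ at $0^{n-|x|}$ is also a collection of pairwise disjoint sensitive blocks for $f$ at the corresponding input, so $\mathsf{mbs}(f,x)=\bs(f_x,0^{n-|x|})\le \bs(f)$. For $\mathsf{GSS}$ this gives $\mathsf{mbs}(\mathsf{GSS})=O(n^{1/2})$, hence $\mathsf{mbs}(\mathsf{GSS})^{3/2}=O(n^{3/4})$. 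Combining with Step 1, $\mathsf{fmbs}(\mathsf{GSS})=\Omega(n^{3/4})=\Omega\big(\mathsf{mbs}(\mathsf{GSS})^{3/2}\big)$, which is the claimed separation.

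\textbf{Main obstacle.} I do not expect a genuine obstacle; the argument is a dictionary translation of \cite{GSS16}. The one point that deserves a sentence of care is verifying that the $\fbs$-versus-$\bs$ gap of \cite{GSS16} is actually attained at the all-zeros input (so it becomes an $\mathsf{fmbs}$ lower bound without having to maximize over inputs that are hard to control), and that the constants in $\bs(\mathsf{GSS})=O(n^{1/2})$ are exactly those used above; both follow immediately from the construction in \cite{GSS16} and the definitions in \autoref{preliminaries}.
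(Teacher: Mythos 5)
Your proposal is correct and follows essentially the same route as the paper: it uses $\mathsf{fmbs}(\mathsf{GSS},0^n)=\fbs(\mathsf{GSS},0^n)=\Omega(n^{3/4})$ (since $f_{0^n}=f$) together with the general bound $\mathsf{mbs}(f)\le \bs(f)$ to get $\mathsf{mbs}(\mathsf{GSS})=O(n^{1/2})$, exactly as in the paper's argument. Your explicit justification of $\mathsf{mbs}(f,x)\le\bs(f)$ and the remark about the gap being attained at $0^n$ just spell out what the paper leaves implicit.
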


Note that the above separation is not tight but it matches the best-known separations for standard measures $\fbs$ and $\bs$.

\end{document}